\documentclass[11pt]{article}
\usepackage[T1]{fontenc}
\usepackage{lmodern}
\usepackage[margin=1in]{geometry}
\usepackage{amsmath,amssymb,amsthm,mathtools}
\usepackage{booktabs,tabularx,array}
\usepackage{enumitem,setspace}
\usepackage{needspace}
\usepackage{microtype}
\usepackage[authoryear,round]{natbib}
\usepackage[hidelinks]{hyperref}
\newtheorem{theorem}{Theorem}[section]
\newtheorem{proposition}[theorem]{Proposition}
\newtheorem{lemma}[theorem]{Lemma}
\newtheorem{corollary}[theorem]{Corollary}
\theoremstyle{definition}

\newtheorem{condition}{Condition}

\theoremstyle{remark}

\newcommand{\E}{\mathbb E}
\newcommand{\Prb}{\mathbb P}

\newcommand{\cP}{\mathcal P}
\newcommand{\cT}{\mathcal T}
\newcommand{\cB}{\mathcal B}
\newcommand{\R}{\mathsf R}
\newcommand{\Sact}{\mathsf S}
\newcommand{\M}{\mathsf M}
\newcommand{\Hact}{\mathsf H}
\hypersetup{pdftitle={Reputation without a Control Group},pdfauthor={Shubh Lashkery and Georgy Lukyanov},pdfsubject={Reputation, delegated implementation, and intergenerational learning},pdfkeywords={reputation, career concerns, organizational learning, knowledge transmission}}
\title{Reputation without a Control Group}
\author{Shubh Lashkery \and Georgy Lukyanov}
\date{September 2026}
\begin{document}
\maketitle
\begin{abstract}
An adviser who warns that a task is difficult may become harder to evaluate when her advice is followed more thoroughly. We study a long-lived adviser and successive short-lived workers who choose between standard and intensive implementation. Standard implementation reveals whether the warning was correct and gives the next worker an opportunity to acquire cost-saving practical knowledge. Intensive implementation protects the project but produces only occasional evidence about the adviser. We construct a stationary sequential equilibrium in which the adviser initially accepts an informative implementation, withholds the next project after her reputation improves, and resumes recommendations once inherited know-how has been lost. The interruption is chosen because it changes the successor's implementation decision: preserving know-how reverses the adviser's preference at the relevant history. All realized evidence remains public. Weak subsequent evidence eventually ends the low-ability adviser's protection, but this can take a long time.
\end{abstract}
\medskip
\noindent\textit{Keywords:} reputation; career concerns; delegated implementation; organizational learning; knowledge transmission.

\section{Introduction}\label{sec:intro}

A technical adviser who assigns a project to a junior engineer may warn that the task will require unusually intensive work. The engineer faces a familiar choice. She can try to carry out the task using the standard procedure, or she can incur the additional expense of a more intensive implementation. If she trusts the adviser's assessment, she is more likely to choose the latter. This response may improve the project's chances of success. At the same time, it makes it harder to tell whether the warning was justified: a successful project does not by itself reveal whether the extra resources were necessary.

To make the choice more concrete, think of a team that can adapt a familiar design in-house or purchase a more expensive, fully supported solution. The second option protects the project against the difficulty identified by the adviser. But it neither shows whether the ordinary design would have worked nor gives the incoming engineer the same experience of adapting that design herself. We use this as a stylized example of two implementation technologies; intensive implementation need not mean doing more of exactly the same work.

The present paper asks how this problem changes when the people carrying out the advice learn from their predecessors. Consider a firm in which successive project teams work with the same technical adviser. A junior engineer can acquire practical knowledge by taking part in the preceding team's work. Having this knowledge makes it less costly to use the standard procedure on her own assignment. By contrast, if that procedure is not used while the two teams overlap, the incoming engineer must reconstruct it at an additional cost. A written account of the earlier outcome may remain available, but it does not provide the same practical preparation.

These two observations create a tension. On the one hand, the adviser benefits when the current team has useful knowledge: it can carry out the project at a lower cost. On the other hand, the same knowledge makes the team more willing to try the standard procedure, whose outcome provides a relatively informative assessment of her warning. Moreover, using that procedure gives the next team an opportunity to learn it. The adviser therefore has a reason to consider not only the outcome of the current project, but also the kind of implementation that will remain attractive to future teams.

We show that this consideration can make a temporary interruption of advice profitable. The adviser may first put forward a project and allow her warning to be tested. After an outcome that improves her reputation, she withholds the next project. The interruption prevents the transfer of practical knowledge to the incoming worker. When she subsequently resumes recommending projects, that worker chooses intensive implementation. The warning is the same, and the public belief about the adviser has not changed during the interruption. What has changed is the worker's cost of using the alternative procedure.

It is worth explaining why the interruption matters. If the incoming worker retained the same practical knowledge, simply postponing the project would leave the adviser facing the same informative implementation tomorrow. In the equilibrium we construct, this postponement would not be worthwhile: the adviser would prefer to put the project forward today. She withholds it because the interruption changes the next worker's incentives. The result is thus not exhausted by the observation that somebody who cares about her reputation may avoid an unfavorable test.

Our model has one long-lived adviser and a sequence of short-lived workers. The adviser's fixed diagnostic ability is unknown both to her and to the public. At each advisory opportunity she can put forward a project accompanied by a warning that standard implementation may be insufficient, or withhold the project. If it is put forward, the worker chooses between standard and intensive implementation. Standard implementation fails precisely when the warning is correct. Intensive implementation succeeds under either realization of project difficulty, although a separate, occasional assessment still provides some information about the adviser. The distinction is between two productive choices, not between undertaking a project and conducting a preliminary trial.

The workers differ in whether they have inherited practical knowledge of the standard procedure. A worker without this knowledge can still use it, but must pay an additional reconstruction cost. During standard implementation, the incoming worker can pay a privately observed learning cost and acquire the knowledge for her own project. She does so for her own benefit; the model does not require an outgoing worker to sacrifice resources for a successor she will never meet again. Nor do we ask the adviser to subsidize training.

For intermediate reputations, the reconstruction cost changes the implementation decision. An experienced worker uses the standard procedure, whereas an inexperienced worker chooses intensive implementation. This is the region in which the adviser may gain from interrupting the transfer of knowledge. At lower reputations, even an inexperienced worker finds the standard procedure attractive. At higher reputations, both kinds of worker choose intensive implementation. Thus practical knowledge affects reputational exposure only over a particular range of beliefs.

The equilibrium result has three parts. First, we give sufficient conditions under which standard implementation renews practical knowledge, and construct a stationary equilibrium with the interruption described above. Second, we isolate the value of losing that knowledge by comparing actual withholding with a one-period postponement that preserves it. The two comparisons have opposite signs at the relevant history. We also change the transition rule permanently and solve the adviser's problem again; preserving knowledge after every postponement removes the same interruption. Third, we show that the subsequent period of intensive implementation is not permanent. The occasional assessments remain informative, and under low ability they eventually lead workers to resume standard implementation.

The existence proof uses an explicit numerical example and exact bounds on continuation values. The inequalities are strict, and the focal sequence survives in a neighborhood of that example. We do not characterize every equilibrium, and the sufficient conditions include a dynamic verification condition rather than only a list of scalar restrictions on primitives. The numerical example establishes that the mechanism can occur; it is not intended as a calibration of an engineering firm or as an estimate of the duration of reputational protection.

The leading interpretation is organizational: an adviser controls which assignments are released, workers choose how to carry them out, and practical knowledge is passed between overlapping teams. Research supervision offers a related example. A supervisor proposes a demanding problem; a student chooses how intensively to approach it; and the next student can learn a particular method by participating in the work. There are important limits to both interpretations. In particular, we assume that withdrawing a proposal prevents the corresponding task from being undertaken, and that standard implementation is the activity through which the relevant cost-saving knowledge is transmitted. These are restrictions on the setting, not general claims about firms or scientific research.\footnote{The title uses the term \emph{control group} to refer to the available comparison with less intensive implementation. There is no randomized untreated group in the model. The actual state variable is inherited practical knowledge, and the information comes from the outcome of the procedure that the worker chooses.}

\subsection{Related literature}

The question is motivated in part by \citet{PrendergastStole1996}. They show that a manager who wants to appear a fast learner may initially put too much weight on private information and later become reluctant to revise earlier decisions. We instead ask what happens when the person being evaluated delegates implementation to somebody else. The recipient's response determines what the adviser can learn, and what can be learned about her. An important difference is that our reputational-exposure cost uses concave career returns; their mechanism does not rely on such curvature.

Reputation can already discourage experimentation or the disclosure of information. \citet{HalacKremer2020} study experimentation with career concerns, while \citet{CamaraDupuis2023} consider an expert who can avoid evaluation by recommending inaction. \citet{AndinaGarcia2020} and \citet{Hauser2023} study suppression of news and censorship, and \citet{Pei2026} studies the erasure of records observed by later consumers. In the present model, no observation is removed after it is produced. The adviser's current choice changes the cost of the productive action that would generate the next informative observation.

The analysis also relates to endogenous monitoring. \citet{Min2025} studies reputation in an advice relationship where the expert privately knows whether she is informed; the common uncertainty about ability here rules out that particular signaling problem. \citet{Liu2011} and \citet{BarIsaacDeb2021} study reputation with costly observation or opportunities to coast. In \citet{MarinovicSzydlowski2022}, monitoring ability requires maintenance, and an agent's behavior can undermine the incentive to maintain it; \citet{MarinovicSzydlowski2023} study monitoring and transparency. \citet{DebIshii2025} analyze reputation when successive opponents are uncertain about the monitoring structure. Their long-lived player knows a persistent monitoring state; here the adviser shares the public's uncertainty about her own ability, and implementation changes the successor's inherited practical knowledge. \citet{ChenEtAl2026} study successive firms testing a reputational regulator, and \citet{AchimKnoepfle2025} study the interaction of trust and costly oversight. These are close precedents for endogenous exposure, but they do not use the transmission of a cost-saving implementation skill between successive decision makers.

A further comparison concerns the transmission of knowledge itself. \citet{CarlinManso2011} study strategic obfuscation and investor sophistication; our intergenerational cost state makes a different distinction between possessing an outcome record and being able to generate another observation cheaply. \citet{GarfagniniStrulovici2016} show how current experimentation can expand the technologies available to successors. \citet{CheHorner2018} study recommendations that induce successive users to generate information. \citet{Strulovici2022} studies sequential investigation when inherited evidence is subject to attrition. The present paper combines the production of fresh evidence with a separate transmission problem: an outcome remains in the public record, while the practical knowledge needed to obtain further evidence cheaply may be lost. Strategic resistance to knowledge transfer is not itself new; \citet{KuangEtAl2026} study an expert who is reluctant to teach a novice because doing so reduces future rents. Here the incoming worker chooses whether to learn, and the adviser's motive concerns future evaluation rather than competition with a newly trained expert.

Two companion projects provide the immediate background. \emph{Endogenous Vindication} \citep{LukyanovVlasova2026} studies a long-lived expert whose reputation affects a recipient's implementation effort and hence the informativeness of the resulting outcome. \emph{Effort without Evidence} \citep{LukyanovEffort2026} studies successive experimenters who exchange cheap-talk accounts of hidden effort. The present model retains delegated implementation and intergenerational transmission, but changes the latter to hands-on learning. Implementation choices and outcomes are public, and there is no strategic predecessor-message game. This difference is substantive: the equilibrium below is not a cheap-talk equilibrium with an additional long-lived player.

The remainder of the paper is organized as follows. Section~\ref{sec:model} describes the model. Section~\ref{sec:implementation} derives the workers' implementation decisions and explains their incentives to acquire practical knowledge. Section~\ref{sec:equilibrium} introduces the adviser's dynamic problem and states the equilibrium result. Section~\ref{sec:comparison} identifies the role of knowledge loss and studies the subsequent learning process. Section~\ref{sec:conclusion} concludes. The appendices provide the full verification and the exact numerical certificates.

\section{The model}\label{sec:model}

\subsection{The adviser and the projects}

Time is discrete, $t=0,1,\ldots$. A long-lived adviser interacts with one active worker at each date. A worker can first acquire knowledge while observing her predecessor and then undertake one project of her own. She has no payoff from subsequent generations' projects. In the organizational interpretation, a \emph{worker} may stand for a project team whose internal decisions are taken jointly.

The adviser's diagnostic ability is fixed:
\[
\theta\in\{B,G\}.
\]
The labels $G$ and $B$ stand for higher and lower ability. Neither the adviser nor any worker observes $\theta$. If $h_t$ denotes the public history at the beginning of date $t$, their common belief about ability is
\[
p_t=\Prb(\theta=G\mid h_t).
\]
In what follows, we refer to $p_t$ as the adviser's \emph{reputation}. It is a belief about the quality of her assessments, not about whether she is honest.

Each date represents an opportunity to release a project for which the adviser's assessment is that intensive implementation may be necessary. She chooses between releasing the proposal, denoted by $\R$, and withholding it, denoted by $\Sact$. A released proposal includes this warning. The model takes the conditional accuracy of the assessment as a primitive: its probability of being correct is $q_\theta$, where
\[
0<q_B<q_G<1.
\]
The adviser does not observe whether the warning is correct before the project is carried out. In particular, she has no additional private signal of her ability at the recommendation stage.

The restriction to these advisory opportunities is part of the model's clock. A date is not every day on which the adviser might say something, and we do not insert additional dates for unrelated work or other assessments. This matters because the transmission of knowledge takes place between successive opportunities in the model.

A proposal must be released before the worker can undertake the corresponding task. This can represent a project assignment, access to a technical problem, or the provision of a project-specific plan. If the adviser withholds it, the project expires and current project surplus is zero. Thus $\R$ is not an unrestricted cheap-talk sentence that the worker could ignore while undertaking exactly the same task. The adviser controls access to the proposal, but she does not control how the worker implements it.\footnote{A firm in which workers can independently obtain the same assignments after the adviser remains silent would require an additional participation or access decision. We do not claim that the equilibrium below applies to that environment without modification.}

\subsection{Implementation and practical knowledge}

After observing $\R$, the worker chooses between two ways of carrying out the project. With \emph{standard implementation}, denoted by $\M$, she uses moderate resources and accepts the risk that they will be insufficient. With \emph{intensive implementation}, denoted by $\Hact$, she incurs a higher cost but completes the project regardless of whether the warning was correct.

The value of a completed project is normalized to one. Standard implementation succeeds if the warning was wrong and fails if it was correct. Write $y=w$ for the former event and $y=c$ for the latter, so that
\[
\Prb(y=c\mid\theta,\M)=q_\theta.
\]
Consequently, a failure under standard implementation is \emph{favorable evidence about the adviser}: it confirms that her warning was justified. This distinction between project success and assessment accuracy will be important below.

The active worker privately knows whether she has inherited practical knowledge of the standard procedure. Let $K_t\in\{0,1\}$ denote this state. A worker with $K_t=1$ uses the procedure at cost $c_M$. A worker with $K_t=0$ can reconstruct it and obtain the same outcome distribution, but pays an additional cost $b>0$. Intensive implementation costs $c_H$ and does not require this particular knowledge.

Let
\[
q(p)=pq_G+(1-p)q_B.
\]
The worker's expected project surpluses are therefore
\[
U_M(p,1)=1-q(p)-c_M,
\qquad
U_M(p,0)=1-q(p)-c_M-b,
\]
and
\[
U_H=1-c_H.
\]
She selects the higher-surplus implementation, choosing $\Hact$ in case of indifference. Under the conditions used below, $U_H>0$, so an additional option to decline a released project for zero would not alter this decision.

Notice what $K$ does and does not represent. An inexperienced worker is not unable to experiment. She can obtain precisely the same strong evidence by paying $b$. Inherited knowledge instead makes standard implementation cheaper, and this cost difference may determine whether she chooses it. Any subsequent loss of information must therefore be explained by an optimal implementation choice, not by a prohibition on testing.

\subsection{Evidence and the public record}

The worker's implementation choice becomes public after the adviser has released the proposal and before its outcome. The outcome of standard implementation is also public and cannot be concealed. Bayes' rule gives
\[
p^+(p)=\frac{pq_G}{q(p)},\qquad
p^-(p)=\frac{p(1-q_G)}{p(1-q_G)+(1-p)(1-q_B)}.
\]
Here $p^+$ follows $y=c$, a correct warning, and $p^-$ follows $y=w$, an incorrect warning. The superscripts describe the direction of the reputation update, not the success or failure of the project.

Intensive implementation always produces the normalized project output, so that output alone does not reveal whether standard implementation would have sufficed. We nevertheless allow a separate assessment of the warning. With probability $1-\varepsilon$ it produces no finding; otherwise it produces a binary public observation. Specifically,
\[
\Prb(z=1\mid\theta,\Hact)=\varepsilon a_\theta,
\qquad
\Prb(z=0\mid\theta,\Hact)=\varepsilon(1-a_\theta),
\]
where
\[
0<\varepsilon<1,\qquad 0<a_B<a_G<1.
\]
The null observation, $z=\varnothing$, has probability $1-\varepsilon$ under either ability. In the leading example, the non-null observation can be interpreted as an occasional external review of whether the project's initial difficulty assessment was sound. Its occurrence and accuracy are exogenous; no player chooses an auditing policy.

Writing
\[
a(p)=pa_G+(1-p)a_B,
\]
the corresponding updates are
\[
p^1(p)=\frac{pa_G}{a(p)},\qquad
p^0(p)=\frac{p(1-a_G)}{p(1-a_G)+(1-p)(1-a_B)}.
\]
A null observation leaves $p$ unchanged. So does withholding a proposal, since the adviser does not know her ability and her action at a given history has the same likelihood under both types.

All outcome draws are fresh: conditional on $\theta$ and the implementation chosen at the current date, the outcome law above is independent of the past, of the workers' knowledge states, and of their learning costs. Equivalently, Nature uses independent fresh innovations across dates, while endogenous implementation selects which of the two experiments is observed. This conditional formulation does not assert that the observations generated by an adaptive equilibrium policy are unconditionally independent.

Both forms of evidence remain in the public archive. We call standard implementation the strong test and the occasional assessment the weak test. No general Blackwell ranking of the two non-null binary signals is imposed by the notation; at the numerical example, the small arrival probability makes the intensive-implementation experiment much less informative per date.

\subsection{Learning from a predecessor}

If the current worker chooses standard implementation, the incoming worker can participate in a live demonstration. Before its outcome is realized, she privately observes a learning cost
\[
\kappa_t\sim U[0,\bar\kappa]
\]
and decides whether to pay it. Costs are iid and independent of $\theta$ and of all outcome innovations. Payment gives her $K_{t+1}=1$; otherwise $K_{t+1}=0$. She evaluates this cost against the expected saving on her own subsequent project, with no additional discounting over these two stages of her life.

There is no such opportunity after intensive implementation or after a withheld proposal. In those cases $K_{t+1}=0$. A worker who has the knowledge leaves after her own project, so her knowledge does not remain available as an independent stock. A worker who reconstructs the procedure can, however, provide a live demonstration just as an experienced worker can. Thus knowledge can be rebuilt after its transmission has been interrupted.

The learning technology is specific. It concerns practical familiarity with the standard procedure, not general education or the ability to undertake intensive work. The incoming worker must see the procedure being used, and reading the public outcome archive alone does not remove the reconstruction cost. We assume that the outgoing worker's ordinary work provides this opportunity without an additional teaching decision. These restrictions give a precise meaning to the intergenerational link.

Learning is private. Let
\[
x_t=\Prb(K_t=1\mid h_t)
\]
be the public probability that the current worker has inherited the knowledge. The adviser knows $x_t$ but not $K_t$. In equilibrium $x_t$ will often equal zero or one, although interior values are needed to describe every history. A single worker's unobserved decision not to learn does not change the public conjecture $x_{t+1}$. She therefore cannot induce a more favorable recommendation by secretly declining the training.

\subsection{Preferences, timing, and equilibrium}

The adviser values the current project's surplus and her reputation. Her objective is
\[
\E\sum_{t=0}^{\infty}\delta^t
\left[u_t+\beta r(p_{t+1})\right],
\qquad
r(p)=2p-p^2,\qquad 0<\delta<1,\quad\beta>0.
\]
Here $u_t$ is the active worker's realized project return net of implementation and, when applicable, reconstruction costs. The expected values of this component are the $U_M$ and $U_H$ specified above. The incoming worker's separate learning cost is not included in $u_t$. The adviser's objective is therefore not aggregate social welfare.

The career return $r$ is increasing and concave. Favorable public assessments are valuable, but their marginal value diminishes. This can represent diminishing returns to reputation in future demand or status, or risk aversion over a career reward that rises with perceived ability. We use the quadratic form to obtain an exact expression for the cost of reputational exposure. There are no transfers, employment contracts, or payments conditional on learning or outcomes.

For clarity, the sequence within a date is as follows.
\begin{enumerate}[label=\arabic*.,leftmargin=2em]
\item The public history determines $(p_t,x_t)$. The active worker privately knows $K_t$.
\item The adviser releases the proposal and warning, $\R$, or withholds it, $\Sact$.
\item Following $\R$, the worker chooses $\M$ or $\Hact$. The choice becomes public; the adviser cannot then withdraw her proposal.
\item Following $\M$, the incoming worker observes her private cost and decides whether to learn, before the implementation outcome is known.
\item The outcome or occasional assessment becomes public, current payoffs are realized, and beliefs about ability are updated. The incoming worker becomes active at the next date.
\end{enumerate}

The public state contains two different kinds of uncertainty: $p$ concerns the adviser, whereas $x$ concerns the current worker. They need not be collapsed into one belief. The timing allows them to be treated separately.

\begin{lemma}\label{lem:factorization}
Suppose $\theta$ and $K_0$ are initially independent. Under the stated information structure, beliefs generated by behavioral strategies preserve
\[
\Prb(\theta,K\mid h)=\Prb(\theta\mid h)\Prb(K\mid h).
\]
The same factorization holds under completely mixed behavioral perturbations.
\end{lemma}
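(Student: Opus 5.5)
We argue by induction on the length of the public history, keeping track of the joint posterior over $(\theta,K_t)$ one date at a time. The base case is the hypothesis that $\theta$ and $K_0$ are independent under the prior. For the inductive step, fix a public history $h_t$ at which $\Prb(\theta,K_t\mid h_t)=\Prb(\theta\mid h_t)\Prb(K_t\mid h_t)$. We then show that every public continuation $h_{t+1}=(h_t,\text{adviser action},\text{worker action},\text{outcome})$ reached with positive probability yields a factorized posterior over $(\theta,K_{t+1})$. We proceed through the stages of the timing in order and write each likelihood as a product of a factor depending only on $\theta$ and a factor depending only on $K$ (or on the incoming worker's private variables).

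The adviser's action, $\R$ or $\Sact$, is measurable with respect to $h_t$: she observes neither $\theta$ nor $K_t$, so its likelihood is constant in $(\theta,K_t)$ and leaves the joint posterior unchanged. After $\R$, the worker's choice between $\M$ and $\Hact$ under a behavioral strategy depends on $(h_t,K_t)$ only, since the worker does not observe $\theta$. Its likelihood $\sigma(a\mid h_t,K_t)$ therefore multiplies the joint posterior by a function of $K_t$ alone, which preserves the product form and reweights only the $K$-marginal. Given the chosen implementation, the outcome law ($q_\theta$ under $\M$, or $\varepsilon a_\theta$ and its complements under $\Hact$) depends only on $\theta$ and the implementation, by the fresh-innovation assumption. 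This step reweights only the $\theta$-marginal. We thus obtain independence of $\theta$ and $K_t$ given the full date-$t$ public record.

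The main obstacle is the passage from $K_t$ to $K_{t+1}$, because the incoming worker's learning decision happens before the outcome and is private. The key step is to establish that $K_{t+1}$ is independent of $\theta$ given $h_{t+1}$. After $\Hact$ or $\Sact$, $K_{t+1}=0$ deterministically, so this is immediate. After $\M$, $K_{t+1}$ is determined by $\kappa_t$ and the incoming worker's behavioral learning strategy, which depends on her private information $(h_t,\M,\kappa_t)$ but not on $\theta$ or on $y$. The variable $\kappa_t$ is independent of $\theta$ and of the outcome innovation. So, conditional on $(h_t,\M)$, the pair $(\kappa_t,K_{t+1})$ is independent of $(\theta,y)$, with $y$ depending on $\theta$ alone. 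Conditioning further on the realized $y$ therefore leaves $K_{t+1}$ independent of $\theta$. Formally, we write the joint density of $(\theta,y,K_{t+1})$ given $(h_t,\M)$ as $\Prb(\theta\mid h_t,\M)\,\Prb(y\mid\theta,\M)\,\Prb(K_{t+1}\mid h_t,\M)$ and divide by $\Prb(y\mid h_t,\M)$. We note that the outgoing worker's $K_t$ is irrelevant at this point, because it does not affect the outcome distribution and she then leaves.

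For completely mixed perturbations, every step goes through unchanged: each perturbed strategy of the adviser, worker, or learner still conditions only on information that excludes $\theta$, and full support ensures that all conditional probabilities are defined by Bayes' rule at every history. This gives factorization at all histories and justifies the public state $(p_t,x_t)$ as defined in the model.
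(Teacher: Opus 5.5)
Your proof is correct and takes essentially the same route as the paper's. The adviser's action carries no information about $(\theta,K)$. The public implementation choice reweights only the $K$-marginal, and the outcome reweights only the $\theta$-marginal. The incoming worker's learning decision depends on a cost independent of $\theta$ and is made before the outcome. You make explicit the induction on dates and the factorized joint density of $(\theta,y,K_{t+1})$, which the paper compresses into ``applying these observations successively.''
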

\begin{proof}
At a given history, the adviser has no private information about $\theta$. The active worker's implementation may depend on $K$, but not on additional information about $\theta$. Observing implementation therefore updates the distribution of $K$ without updating $p$. Conditional on that implementation, the outcome depends on $\theta$ but not on $K$. Finally, the incoming worker decides whether to learn before seeing that outcome, using a cost independent of $\theta$. Applying these observations successively proves the claim.
\end{proof}

We study stationary sequential equilibria in which strategies depend on public history through $(p,x)$, with a worker's own knowledge or learning cost included at her private decision nodes. Sequential rationality is required at every history, and beliefs must be limits of Bayes-consistent beliefs under completely mixed perturbations. We refer to this as a \emph{factorized Markov stationary sequential equilibrium}. We use this class to establish existence, not to assert that every equilibrium of the game must be stationary.

\section{Implementation and learning}\label{sec:implementation}

Before studying the adviser's recommendation, let us determine what a worker does when a proposal is released. This comparison explains why inherited knowledge matters for the adviser's evaluation.

\subsection{When does knowledge change implementation?}

Write
\[
d=q_G-q_B>0,\qquad D=c_H-c_M.
\]
The advantage of standard over intensive implementation is
\[
U_M(p,0)-U_H=D-b-q_B-dp
\]
for an inexperienced worker and
\[
U_M(p,1)-U_H=D-q_B-dp
\]
for an experienced worker. Both advantages decrease with reputation. The more credible the warning that standard resources will be insufficient, the less attractive it is to economize on those resources.

Suppose
\[
q_B+b<D<q_G.
\]
There are then two interior thresholds,
\[
p_R=\frac{D-b-q_B}{d},\qquad
p_T=\frac{D-q_B}{d},\qquad p_R<p_T.
\]
The subscript $R$ refers to a worker who must reconstruct the procedure, and $T$ to one who has been trained.

\begin{proposition}\label{prop:implementation}
Conditional on a released proposal, an inexperienced worker chooses $\M$ if and only if $p<p_R$, and an experienced worker chooses $\M$ if and only if $p<p_T$.
\end{proposition}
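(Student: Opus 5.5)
The proof is a direct comparison of myopic surpluses. The worker is short-lived and has no payoff from later generations, so at the implementation node she maximizes expected current surplus given the public reputation $p$. Since $p$ is the common belief and Lemma~\ref{lem:factorization} guarantees that her private state $K$ carries no information about $\theta$, she evaluates $\M$ with the failure probability $q(p)$. Her choice therefore reduces to comparing $U_M(p,K)$ with $U_H$, choosing $\Hact$ under indifference as the model specifies. Her own future learning decision concerns the incoming worker, not herself, and so does not enter this comparison.

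Next I use the linear expressions already derived:
\[
U_M(p,0)-U_H = D-b-q_B-dp,\qquad U_M(p,1)-U_H = D-q_B-dp.
\]
With $d>0$, each is strictly decreasing and affine in $p$. The inexperienced worker strictly prefers $\M$ if and only if $dp<D-b-q_B$, that is, $p<p_R$. The experienced worker strictly prefers $\M$ if and only if $p<p_T$. At $p=p_R$ (respectively $p_T$) she is indifferent, and the tie-breaking rule selects $\Hact$, which gives the strict inequalities in the statement. The condition $q_B+b<D<q_G$ ensures $0<p_R<p_T<1$, with $p_T-p_R=b/d>0$, so both thresholds are interior. Strictly speaking, interiority is not needed for the "if and only if" claim, only for its being nontrivial.

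There is no real obstacle. The one point needing care is that the worker's belief about $\theta$ coincides with the public $p$ even though she privately knows $K$. I handle this by invoking Lemma~\ref{lem:factorization}, together with the observation that her learning cost is independent of $\theta$.
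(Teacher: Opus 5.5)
Your proof is correct and matches the paper's: each worker chooses $\M$ exactly when the affine advantage $D-b-q_B-dp$ (inexperienced) or $D-q_B-dp$ (experienced) is strictly positive, and the tie-breaking rule assigns the thresholds themselves to $\Hact$. Your two extra points — that the worker evaluates $\M$ at the public belief $p$ by Lemma~\ref{lem:factorization}, and that her choice depends only on her own project surplus — are sound, and the paper leaves both implicit in the model setup.
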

\begin{proof}
Each worker chooses standard implementation precisely when the corresponding advantage above is positive. At equality, the specified tie-breaking rule selects intensive implementation.
\end{proof}

There are consequently three regions. Below $p_R$, even paying to reconstruct the procedure is worthwhile. Above $p_T$, even an experienced worker prefers intensive implementation. Between the two thresholds, inherited knowledge changes the action: $\M$ when $K=1$, but $\Hact$ when $K=0$.

We shall call $[p_R,p_T)$ the \emph{knowledge-sensitive region}. Its width is
\[
p_T-p_R=\frac{b}{q_G-q_B}.
\]
The public belief about the adviser may be identical for two successive workers, yet their optimal choices differ because one has learned from her predecessor and the other has not. This is the comparison that will make a break in knowledge transmission valuable to the adviser.

\subsection{Why acquiring knowledge can be worthwhile}

The incoming worker does not learn in order to evaluate the adviser for society. She learns because knowledge can reduce the cost of her own implementation. But she must pay before observing the current outcome, and that outcome may change both the next recommendation and her own implementation decision. Her willingness to learn therefore cannot be established from the current cost saving alone.

The following condition gives a useful way to resolve this problem:
\[
(D-q_B)(q_G-D)>b(1-D),
\qquad
\bar\kappa<b(1-D).
\]
The first inequality is equivalent to $p^-(p_T)<p_R$. Thus, following any rational use of $\M$, an outcome unfavorable to the adviser sends reputation below the inexperienced worker's threshold. In the equilibrium constructed below, the adviser releases a project at every such low belief, whatever the public probability of training. On that branch, the incoming worker will use $\M$ whether or not she has learned, but learning saves $b$.

The probability of that branch is $1-q(p)>1-D$. Her expected private saving is therefore greater than $b(1-D)$, which exceeds even the highest learning cost $\bar\kappa$. This makes training strictly optimal for every cost type. Notice that this argument uses the adviser's low-reputation policy. It is part of the joint equilibrium verification, not a claim about learning under arbitrary future recommendations.

At other outcomes the adviser may withhold the next project, making the acquired knowledge useless for that worker's own assignment. This does not invalidate the learning decision: the possibility of the low-reputation branch already pays for it. Because learning is private, an individual worker who declines to learn cannot make the adviser change her recommendation by announcing a lower public training probability.

It follows that, along the proposed equilibrium and after its payoff-relevant deviations, $\M$ sends the next state to $x'=1$. Both $\Hact$ and $\Sact$ send it to $x'=0$. The appendix checks learning incentives also after implementation choices that occur only as trembles.

\subsection{Why an informative outcome is costly to the adviser}

An increase in the probability of a correct warning is not the same thing as an increase in expected reputation following a test. Bayesian reputation is a martingale:
\[
\E[p'\mid p]=p.
\]
The adviser cannot expect a favorable revision merely because she agrees to be evaluated. With concave career returns, she may instead dislike the uncertainty that evaluation creates.

For our quadratic specification,
\[
r(p)-\E[r(p')\mid p]=\operatorname{Var}(p'\mid p).
\]
For standard implementation this variance is
\[
J_M(p)=\frac{p^2(1-p)^2d^2}{q(p)[1-q(p)]}.
\]
For a non-null external assessment it is
\[
J_H(p)=\frac{p^2(1-p)^2(a_G-a_B)^2}{a(p)[1-a(p)]},
\]
so the variance per date of intensive implementation is $\varepsilon J_H(p)$.

If the adviser cared only about the current date, releasing a proposal that induces standard implementation by an experienced worker would give, relative to withholding it,
\[
U_M(p,1)-\beta J_M(p).
\]
The example below makes this expression strictly positive at the history where the long-lived adviser nevertheless withholds the proposal. Thus the interruption cannot be attributed simply to a negative one-period return. What matters is the continuation after each choice.

\section{Reputation and interruptions of advice}\label{sec:equilibrium}

\subsection{The continuation problem}

Under full renewal after standard implementation, the next knowledge state is either zero or one. Let $V_k(p)$ be the adviser's equilibrium continuation value when the worker's knowledge state is publicly known to be $k\in\{0,1\}$. For a bounded function $f$, define
\[
\cP_Mf(p)=q(p)f(p^+(p))+[1-q(p)]f(p^-(p))
\]
and
\[
\cP_Hf(p)=(1-\varepsilon)f(p)
+\varepsilon\{a(p)f(p^1(p))+[1-a(p)]f(p^0(p))\}.
\]
These operators take expectations over the next reputation under the two implementation choices.

The value of withholding a proposal is
\[
Q^S(p)=\beta r(p)+\delta V_0(p).
\]
Current project surplus is zero; reputation is unchanged; and the next worker has no inherited knowledge. If releasing the proposal induces standard implementation, its value is
\[
Q_k^M(p)=U_M(p,k)+\cP_M[\beta r+\delta V_1](p).
\]
Here the current outcome updates reputation, and the live demonstration allows the successor to acquire knowledge. If it induces intensive implementation, the value is
\[
Q^H(p)=U_H+\cP_H[\beta r+\delta V_0](p).
\]
These are induced action values. The adviser chooses whether to release the proposal, while the worker chooses its implementation.

Consider the following candidate recommendation policy. With an inexperienced worker, the adviser always releases the proposal. With an experienced worker, she releases it outside the knowledge-sensitive region and withholds it inside that region. Combining this policy with Proposition~\ref{prop:implementation} gives the following outcomes:
\begin{center}
\begin{tabular}{lll}
\toprule
Reputation & Inexperienced worker & Experienced worker\\
\midrule
$p<p_R$ & Standard, with reconstruction & Standard\\
$p_R\le p<p_T$ & Intensive & Proposal withheld\\
$p\ge p_T$ & Intensive & Intensive\\
\bottomrule
\end{tabular}
\end{center}
The table describes the equilibrium outcome, not the worker's response to a proposal that was withheld. In its middle-right entry, an experienced worker would choose standard implementation if the adviser deviated and released the proposal.

\subsection{A path through the equilibrium}

Let us first describe the mechanism using an example; the existence result follows in the next subsection. Take
\begin{equation}
\begin{gathered}
q_B=\frac35,\quad q_G=\frac34,\quad c_M=\frac1{10},\quad
b=\frac1{100},\quad c_H=\frac{19}{25},\\
\beta=14,\quad\delta=\frac7{10},\quad\varepsilon=\frac1{100},\quad
a_B=\frac25,\quad a_G=\frac35,\quad\bar\kappa=\frac1{400}.
\end{gathered}
\tag{W}\label{eq:W}
\end{equation}
The two worker thresholds are
\[
p_R=\frac13,\qquad p_T=\frac25.
\]
Start with an experienced worker and reputation $p_0=5/16$, which is below $p_R$. The adviser releases the proposal, the worker uses standard implementation, and the successor acquires knowledge.

Suppose that the warning is confirmed. The project has failed under standard implementation, but this improves the adviser's reputation:
\[
p_1=p^+(p_0)=\frac{25}{69}.
\]
This new belief lies strictly between $1/3$ and $2/5$. The next worker, who has inherited knowledge, would still use standard implementation. The adviser instead withholds the proposal.

There is no outcome during that interruption, so reputation stays at $p_1$. But there is also no live demonstration: the following worker starts without the practical knowledge. At the same public belief she now prefers intensive implementation. The adviser releases the next proposal, and the project proceeds.

The choice is quantitatively small on the worker's side. At $p_1$,
\[
U_M(p_1,1)=\frac{113}{460}\simeq0.24565,
\qquad
U_H=\frac6{25}=0.24,
\]
whereas the inexperienced worker's standard-implementation surplus is lower by $b=0.01$, and hence below $0.24$. The loss of a small cost advantage changes the implementation decision. It does not make the worker more trusting, and it does not make the procedure impossible to reconstruct.

The adviser's preference also changes for a specifically dynamic reason. If one postponement left the next worker experienced, releasing the proposal at $p_1$ would be better than postponing it. With actual knowledge loss, the opposite inequality holds. Section~\ref{sec:comparison} states this comparison precisely.

The confirmation leading to $p_1$ occurs with positive probability under either ability. If instead the first warning is disconfirmed, reputation falls to $p^-(p_0)=25/113$, below $p_R$, and standard implementation remains attractive. The theorem does not say that every realization follows the interruption path.

\subsection{Existence and what supports it}

Four economic requirements organize the construction. First, reconstruction must be costly enough to create two distinct, interior implementation thresholds, with an initial reputation that can cross into the knowledge-sensitive region after a confirming outcome. Second, the learning cost must be small enough that the disconfirming branch makes acquiring knowledge worthwhile. Third, the project at $p_1$ must remain worthwhile even after charging the one-period cost of reputational exposure. Finally, releasing a project must be profitable when it induces reconstruction at low beliefs or intensive implementation at higher beliefs.

These requirements are not sufficient on their own. The adviser compares infinite-horizon continuation values, and we need to establish that withholding is optimal throughout the knowledge-sensitive region. Appendix~\ref{app:verification} states the four restrictions as Conditions~\ref{cond:c1}--\ref{cond:c4}, followed by Condition~\ref{cond:c5}, which bounds the remaining continuation-value comparisons. That last condition is an explicit verification certificate, not another scalar assumption with an immediate economic interpretation.

\begin{theorem}\label{thm:main}
Under the model of Section~\ref{sec:model} and Conditions~\ref{cond:c1}--\ref{cond:c5} in Appendix~\ref{app:verification}, the game has a factorized Markov stationary sequential equilibrium with the displayed recommendation policy. The incentive and crossing inequalities in those conditions are strict; the subsolution and supersolution inequalities are weak.

In the knowledge-sensitive region, there is a unique cutoff $\widehat x(p)\in(0,1)$ in the public probability of an experienced worker: the adviser releases the proposal below this cutoff and withholds it above the cutoff. Outside that region she releases it at every $x$.

Starting from $(p_0,1)$, a proposal induces standard implementation and every incoming worker learns. A confirming outcome reaches $p_1$, where the adviser withholds the next proposal. Inherited knowledge is lost, after which she resumes proposals with intensive implementation and occasional informative assessments. At $p_1$, preserving knowledge after one postponement reverses the adviser's strict preference for withholding.

The parameters in \eqref{eq:W} satisfy the conditions. The focal path and the strict reversal persist on an open neighborhood in the sense of Lemma~\ref{lem:robust}. Under low ability, the subsequent phase of intensive implementation ends almost surely.
\end{theorem}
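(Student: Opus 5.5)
The proof is a guess-and-verify construction. I would first fix the candidate pair $(V_0,V_1)$ implied by the displayed recommendation policy. On the domain of public beliefs, define the policy-induced Bellman operator $\cT$ on pairs of bounded functions:
\[
(\cT V)_0(p)=\mathbf 1\{p<p_R\}\,Q_0^M(p)+\mathbf 1\{p\ge p_R\}\,Q^H(p),
\]
\[
(\cT V)_1(p)=\mathbf 1\{p<p_R\}\,Q_1^M(p)+\mathbf 1\{p_R\le p<p_T\}\,Q^S(p)+\mathbf 1\{p\ge p_T\}\,Q^H(p).
\]
Here the $Q$'s are built from $V$ as in Section~\ref{sec:equilibrium}. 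Because $\delta<1$, $\cT$ is a contraction in the sup norm, so it has a unique fixed point. Monotonicity of $\cT$ then gives the following: any pair of functions satisfying the subsolution inequalities $\underline V\le\cT\underline V$ and the supersolution inequalities $\overline V\ge\cT\overline V$ sandwiches the fixed point. I would read Condition~\ref{cond:c5} as supplying exactly such weak sub- and supersolution bounds. Those bounds then convert the infinite-horizon comparisons into finitely many explicit strict inequalities on action values.

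Second, I would verify sequential rationality node by node, following the timing.
\begin{itemize}
\item \emph{Workers' implementation.} This is Proposition~\ref{prop:implementation}, using Condition~\ref{cond:c1} ($q_B+b<D<q_G$).
\item \emph{Learning.} Condition~\ref{cond:c2} implies $p^-(p_T)<p_R$ and $\bar\kappa<b(1-D)$. Combined with the adviser's release at every $p<p_R$, the argument of Section~\ref{sec:implementation} shows that every cost type learns after $\M$. So $x'=1$ after $\M$ and $x'=0$ after $\Hact$ or $\Sact$. Off-path trembles are handled the same way, and Lemma~\ref{lem:factorization} lets $(p,x)$ serve as the state.
\item \emph{The adviser at interior $x$.} Releasing yields the mixture $x\,Q_1^M(p)+(1-x)\,Q^{a}_0(p)$, where $a$ is the inexperienced worker's induced action; withholding yields $Q^S(p)$. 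Both are affine in $x$, so the release-minus-withhold difference is affine in $x$.
\item \emph{Inside $[p_R,p_T)$.} The release-minus-withhold difference is $\ge0$ at $x=0$ and $<0$ at $x=1$. Condition~\ref{cond:c4} gives the first (release with intensive implementation beats withholding) and Condition~\ref{cond:c5} gives the second (withholding beats releasing to an experienced worker). An affine function with these signs has a unique root, which is $\widehat x(p)\in(0,1)$ when both inequalities are strict.
\item \emph{Outside $[p_R,p_T)$.} Release must dominate at both endpoints $x=0,1$, which again comes from Conditions~\ref{cond:c3}--\ref{cond:c4} and the value bounds.
\end{itemize}
Consistency of beliefs would follow by taking perturbations under which Lemma~\ref{lem:factorization} holds and passing to the limit.

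Third, for the focal path I would compute directly with \eqref{eq:W}. From $p_0=5/16<p_R$, the adviser releases and the worker uses $\M$. A confirming outcome gives $p_1=25/69\in[1/3,2/5)$, so with $x=1$ the adviser withholds and the next state is $(p_1,0)$, where release induces $\Hact$.

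For the reversal, I would compare $Q^S(p_1)=\beta r(p_1)+\delta V_0(p_1)$ with the counterfactual $\beta r(p_1)+\delta V_1(p_1)$. In the counterfactual, $V_1(p_1)$ is evaluated with release forced at $p_1$, giving $V_1(p_1)=Q^M_1(p_1)$. Showing that this counterfactual exceeds $Q^M_1(p_1)$ amounts to $(1-\delta)Q_1^M(p_1)<\beta(1-\delta)r(p_1)$ having the opposite sign, so release is better. This is checked numerically using the bounds from Condition~\ref{cond:c5}.

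For robustness, every strict inequality involves finitely many continuous functions of the parameters, with values controlled by the sub- and supersolutions. They therefore persist on a neighborhood, as in Lemma~\ref{lem:robust}.

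Finally, for termination under $\theta=B$, consider the likelihood ratio $\ell_t=p_t/(1-p_t)$ along the intensive phase. Its log-increments are iid with strictly negative mean under $B$, namely $-\varepsilon\,\mathrm{KL}(a_B\|a_G)<0$. Null draws keep the ratio unchanged, and $M$-phases only help. By the strong law, $p_t\to0$ almost surely, so $p_t$ eventually falls below $p_R$, where the inexperienced worker chooses $\M$. The intensive phase therefore ends almost surely.

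The main obstacle is the second step inside the knowledge-sensitive region: proving that withholding beats releasing at $x=1$ for \emph{every} $p\in[p_R,p_T)$, not just at $p_1$. This requires uniform control of $V_1(p^\pm(p))-V_0(p)$. I would first try to obtain this from piecewise-constant sub- and supersolutions on a fine grid of beliefs, using monotonicity of $p^\pm$, $p^{0,1}$, and $r$ to extend the grid checks to whole intervals. If the gap near $p_R$ is too thin, I would refine the grid locally.
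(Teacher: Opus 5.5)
Your overall architecture is the paper's: a policy operator, the monotone sandwich from Condition~\ref{cond:c5}, a release margin affine in $x$, learning via Condition~\ref{cond:c2}, and the strong law for termination. The reversal step, however, proves a different and much weaker inequality.

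\textbf{The reversal.} The paper's $A_K(p_1)=Q_1^M(p_1)-[\beta r(p_1)+\delta V_1(p_1)]$ holds the \emph{equilibrium} continuation fixed. At $(p_1,1)$ the equilibrium withholds, so $V_1(p_1)=\beta r(p_1)+\delta V_0(p_1)>Q_1^M(p_1)$. Forcing release after the postponement therefore substitutes a strictly smaller continuation value. In transformed values $W_k=V_k-\gamma r$, your comparison becomes $(1-\delta)Q_1^M(p_1)>\beta r(p_1)$, i.e.\ $X:=m_1(p_1)+\delta\cP_MW_1(p_1)>0$. (Your display has a spurious factor $1-\delta$ on the right, which weakens it further.) That inequality is only impatience: release today beats the identical release tomorrow. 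It holds with a margin above $0.37$ and does not isolate the knowledge channel. The theorem instead needs $X>\delta W_1(p_1)=\delta^2W_0(p_1)$. Together with $A_D(p_1)=X-\delta W_0(p_1)<0$, this is what makes $\delta G_0(p_1)=\delta(1-\delta)W_0(p_1)$ pivotal, as in Proposition~\ref{prop:local}. This inequality is thin (the certified margin is about $0.0034$). It requires the focal part of Condition~\ref{cond:c5}: the lower bound $L$ at the favorable successor $125/301$, the finite-plan floor $F_3(125/477)>1/5$ at the unfavorable one, and the upper bound $\delta U$ on $W_1(p_1)$.

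\textbf{Robustness.} Two further steps fail as written. First, robustness does not follow ``by continuity.'' The values are infinite-horizon fixed points under a discrete implementation rule whose cutoffs move with the primitives. The envelopes in Condition~\ref{cond:c5} have breakpoints ($8/33$, $2/7$, $8/23$, $16/41$) at preimages of those cutoffs. Their weak inequalities hold with equality at some points: identically on the band components, and at $p=0$ for both $\psi$ and $u_0^D=99/100$. So a small decrease in $c_M$ already breaks the supersolution property at $p=0$. The paper argues differently. Odds reachable from $p_0$ have the form $\frac5{11}(5/4)^{n_c}(5/8)^{n_w}(3/2)^{n_1}(2/3)^{n_0}$, and the prime $11$ keeps them off the cutoff odds $1/2$ and $2/3$. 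Worker choices are therefore locally constant on every finite tree. Continuity on a truncated tree, plus a tail bound below the certified strict gaps, then carries the signs to the infinite horizon. Other histories are re-completed, which is why Lemma~\ref{lem:robust} asserts only that \emph{some} equilibrium has the focal sequence.

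\textbf{Off-path learning and release outside the band.} Off-path learning is not ``handled the same way'' after a tremble to $\M$ at $p\ge p_T$. There $1-q(p)\le 1-D$, so training may be partial. If $p^-(p)$ lies in the band, the value of learning depends on whether the adviser releases at $(p^-(p),x')$, which in turn depends on $x'$. Appendix~\ref{app:trembles} resolves this fixed point with the adviser mixing at $\widehat x(p^-(p))$. Relatedly, release outside the band should come from the exact policy identities: $W_1=W_0+b$ below $p_R$ and $W_1=W_0$ above $p_T$. These give margins $(1-\delta)W_0+b$ and $(1-\delta)W_0$, both positive by Condition~\ref{cond:c4}. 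Condition~\ref{cond:c3} plays no role here, and the envelope bounds are far too coarse for this comparison below $p_R$.
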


The complete proof is in Appendix~\ref{app:verification}. Its order is useful to keep in mind. We first subtract the discounted career return from a reputation that would never change. This turns the reputational term into a cost proportional to posterior variance. We then bound the candidate policy's continuation values and verify the adviser's deviations. The low-reputation policy and the workers' learning decisions support each other, and are verified jointly. Finally, we complete learning decisions after off-path implementation choices and show that the strict focal comparisons survive small changes in primitives.

The existence result does not rest on a simulated belief grid. The example uses rational parameters, and the inequalities are checked using exact algebra and finite recursions. This is useful for establishing existence but should not obscure the result's scope: we have constructed an equilibrium with the stated behavior, not characterized all equilibria of a general class of organizations.

\section{The role of knowledge loss}\label{sec:comparison}

\subsection{Withholding or merely postponing?}

An adviser with career concerns might prefer not to be tested even if knowledge were never lost. To distinguish this familiar motive from the mechanism here, hold fixed the continuation values of the original equilibrium and change only what one current postponement does to the next worker.

For $p$ in the knowledge-sensitive region, write $Q_D^S(p)=Q^S(p)$ for actual withholding and define the standard-implementation margin
\[
A_D(p)=Q_1^M(p)-Q_D^S(p).
\]
The subscript $D$ refers to the loss of inherited knowledge. Consider instead a one-period postponement that preserves $K=1$, after which the original continuation game resumes. Its corresponding margin is
\[
A_K(p)=Q_1^M(p)-[\beta r(p)+\delta V_1(p)].
\]
The project benefit and the current experiment are identical in these comparisons. Only the successor's knowledge after postponement is different.

Under the constructed policy, define
\[
G_0(p)=Q^H(p)-Q_D^S(p)=V_0(p)-V_1(p)>0.
\]
This is the adviser's advantage, at that belief, from facing an inexperienced rather than an experienced worker. The former implements intensively, whereas the latter would choose the informative standard procedure if a proposal were released.

\Needspace{8\baselineskip}
\begin{proposition}\label{prop:local}
Withholding is strictly preferred under actual knowledge loss, but releasing the proposal is strictly preferred to a one-period knowledge-preserving postponement, if and only if
\[
-\delta G_0(p)<A_D(p)<0.
\]
\end{proposition}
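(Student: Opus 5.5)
The plan is to reduce both preference statements to a sign condition on $A_D(p)$. The only other ingredient is one identity linking $A_K$ to $A_D$ through $G_0$.

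First, unpack the definitions. Withholding under actual knowledge loss is strictly preferred to releasing to an experienced worker exactly when $Q_D^S(p)>Q_1^M(p)$. In terms of the margin this reads $A_D(p)<0$. Releasing is strictly preferred to a knowledge-preserving postponement exactly when $Q_1^M(p)>\beta r(p)+\delta V_1(p)$, that is, when $A_K(p)>0$. The proposition therefore asserts that
\[
A_D(p)<0 \quad\text{and}\quad A_K(p)>0
\]
together are equivalent to the displayed chain of inequalities.

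Second, compute $A_K-A_D$. Both margins share $Q_1^M(p)$, so
\[
A_K(p)-A_D(p)=Q_D^S(p)-\beta r(p)-\delta V_1(p).
\]
Substituting $Q_D^S(p)=\beta r(p)+\delta V_0(p)$ gives $A_K(p)-A_D(p)=\delta[V_0(p)-V_1(p)]=\delta G_0(p)$. The step I would check most carefully is the identity $G_0(p)=V_0(p)-V_1(p)$ used in the definition of $G_0$. Under the constructed policy, in the knowledge-sensitive region an inexperienced worker receives the proposal and implements intensively (Proposition~\ref{prop:implementation}), so $V_0(p)=Q^H(p)$. An experienced worker has the proposal withheld, so $V_1(p)=Q^S(p)=Q_D^S(p)$. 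Hence $V_0(p)-V_1(p)=Q^H(p)-Q_D^S(p)=G_0(p)$, as stated. Positivity of $G_0$ is part of the equilibrium verification in Theorem~\ref{thm:main}, and the equivalence itself does not need it.

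Third, conclude. Since $A_K(p)=A_D(p)+\delta G_0(p)$, the condition $A_K(p)>0$ is equivalent to $A_D(p)>-\delta G_0(p)$. Combining this with $A_D(p)<0$ gives exactly $-\delta G_0(p)<A_D(p)<0$, and each step reverses. The only genuine content is the bookkeeping that the knowledge-preserving postponement resumes the original continuation with $V_1$ while holding all other values fixed. No step is a real obstacle, but that identification of $V_0$ and $V_1$ with the induced action values on this region is where care is needed.
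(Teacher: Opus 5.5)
Your proposal is correct and matches the paper's proof: subtracting the margins gives $A_K(p)-A_D(p)=\delta[V_0(p)-V_1(p)]=\delta G_0(p)$, so $A_D(p)<0<A_K(p)$ is equivalent to $-\delta G_0(p)<A_D(p)<0$. Your added check that $V_0=Q^H$ and $V_1=Q_D^S$ on the knowledge-sensitive region under the candidate policy is accurate. So is your remark that the equivalence does not need $G_0>0$.
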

\begin{proof}
Subtracting the two margins gives
\[
A_K(p)-A_D(p)
=\delta[V_0(p)-V_1(p)]
=\delta G_0(p).
\]
Thus $A_D(p)<0<A_K(p)$ is equivalent to the stated inequalities.
\end{proof}

The right inequality says that the adviser actually withholds the proposal. The left says that the gain from changing the successor's implementation is large enough to account for this decision: remove that gain, and she would put the project forward. This is an exact necessary-and-sufficient comparison, conditional on the equilibrium continuation values. It is not a primitive characterization of when such an equilibrium exists.

At $p_1$ in example \eqref{eq:W}, the exact bounds imply
\[
A_D(p_1)<-0.01843,\qquad A_K(p_1)>0.00340.
\]
Moreover, the one-period margin remains positive:
\[
U_M(p_1,1)-14J_M(p_1)
=\frac{111524267}{651019140}>\frac{17}{100}.
\]
The adviser is therefore not withholding an intrinsically unprofitable project. She is forgoing a current gain in order to affect how subsequent projects will be implemented.

\subsection{Preserving knowledge after every interruption}

The preceding comparison changes only one transition. A stronger exercise changes the technology permanently and solves the adviser's problem again. Suppose that withholding preserves the current knowledge state, standard implementation gives the successor knowledge, and intensive implementation does not. On the states $k\in\{0,1\}$, the imposed transition rule is
\[
\Sact:k\mapsto k,\qquad
\M:k\mapsto1,\qquad
\Hact:k\mapsto0.
\]
Worker implementation still follows the same payoff comparison, while the adviser optimizes given the altered transitions.

This is a mechanical technology counterfactual. It does not assert that the original private learning decisions would remain unchanged if workers were offered a new way to preserve their knowledge. In particular, it is not a welfare evaluation of a training subsidy or a proposed institutional reform.

\begin{corollary}\label{cor:preserve}
At \eqref{eq:W}, the adviser in the knowledge-preserving economy releases the proposal at both $(p_0,1)$ and $(p_1,1)$. The interruption at $p_1$ is absent.
\end{corollary}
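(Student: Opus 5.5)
The plan is to solve the adviser's problem in the knowledge-preserving economy directly at the parameters \eqref{eq:W}, keeping the worker thresholds $p_R=1/3$ and $p_T=2/5$ from Proposition~\ref{prop:implementation}. The new Bellman system is on states $(p,k)$ with $k\in\{0,1\}$. The continuation values $\tilde V_0,\tilde V_1$ satisfy
\[
\tilde V_k(p)=\max\bigl\{\beta r(p)+\delta\tilde V_k(p),\; Q^{\mathrm{rel}}_k(p)\bigr\},
\]
where $Q^{\mathrm{rel}}_k$ is $Q^M_k$ or $Q^H$ according to the worker's implementation at $(p,k)$. The next-state values use $\tilde V_1$ after $\M$ and $\tilde V_0$ after $\Hact$.

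As in Appendix~\ref{app:verification}, I will first subtract $\beta r(p)/(1-\delta)$. This turns the reputational term into the variance cost $\beta J_M$ or $\varepsilon\beta J_H$.

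The key structural observation concerns withholding. In the normalized problem, withholding now yields zero flow and returns to the same state $(p,k)$. So if withholding were optimal at $(p,1)$, the state would remain $(p,1)$ forever, and its normalized value would be $0$. Withholding at $(p,1)$ is therefore strictly dominated as soon as releasing yields a strictly positive normalized value. It suffices to show
\[
U_M(p,1)-\beta J_M(p)+\delta\,\cP_M\tilde W_1(p)>0,
\]
at $p=p_0$ and $p=p_1$, where $\tilde W_1\ge 0$ is the normalized value with an experienced worker. The inequality $\tilde W_1\ge0$ holds because the adviser can always withhold forever.

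Dropping the nonnegative continuation term, it is enough that the one-period margin $U_M(p,1)-\beta J_M(p)$ is positive at both beliefs. At $p_1$ this is exactly the inequality displayed after Proposition~\ref{prop:local}, which gives a value above $17/100$. At $p_0=5/16$ I will compute $U_M(p_0,1)-14J_M(p_0)$ exactly with rational arithmetic, as in the certificates. The terms are $q(p_0)=0.6+0.15\cdot 5/16$, $U_M=0.9-q(p_0)$, and $J_M$ from its closed form. I expect this to be comfortably positive, since $J_M$ is of order $10^{-3}$ and $U_M\approx0.25$.

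Two points need care. First, ties and the possibility of mixed strategies: the strict inequality delivers strict preference for release, so no tie-breaking is needed. Second, "the interruption is absent" should mean that the focal path $(p_0,1)\to(p_1,1)$ has release at both nodes. Because $\M$ maps $k\mapsto1$ in the altered economy and the adviser releases at $(p_0,1)$, the state after confirmation is indeed $(p_1,1)$, and release there follows from the previous step.

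The main obstacle is whether the one-period bound alone suffices at $p_0$. If it did not, I would fall back on a lower bound for $\tilde W_1$ at $p^\pm(p_0)$. That bound would come from the feasible stationary policy "always release", evaluated by the same finite recursions used for the original equilibrium. Given the magnitudes, I expect the simple bound to suffice.
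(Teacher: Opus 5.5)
\textbf{There is a genuine gap: the normalized flow you use has the wrong coefficient, and the argument fails at $p_1$.}

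Your self-loop reduction is sound. After subtracting $\gamma r(p)$ with $\gamma=\beta/(1-\delta)$, withholding at $(p,k)$ in the preserving economy is worth $\delta\widetilde W_k(p)$. So release is strictly optimal exactly when the normalized release value is positive.

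The problem is the coefficient on the variance term. Because $\beta+\delta\gamma=\gamma$, the normalization leaves the capitalized exposure cost $\gamma J_M$ with $\gamma=140/3$, not $\beta J_M$ with $\beta=14$. The normalized flow is therefore $m_1(p)=U_M(p,1)-\gamma J_M(p)$. The same coefficient appears without normalizing: using $\widetilde V_1\ge\gamma r$ at the successors, release beats withholding whenever $U_M(p,1)+\gamma[\cP_M r(p)-r(p)]=m_1(p)>0$. Your displayed condition with $\beta J_M$ is strictly weaker than this and does not imply release.

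The inequality $U_M(p_1,1)-14J_M(p_1)>17/100$ that you cite is the paper's uncapitalized one-period margin, and it does not enter this comparison. The relevant quantity is negative by design (Condition~\ref{cond:c3}): $m_1(p_1)=\frac{113}{460}-\frac{24200000}{97652871}\approx-0.0022$. So dropping the continuation term fails exactly at $p_1$, the node you treated as settled. By contrast, $p_0$, the node you flagged as the obstacle, is immediate: $m_1=m_0+b>0$ on $p<p_R$ by Condition~\ref{cond:c4}.

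\textbf{The repair is one line and keeps your structure.} At $p^+(p_1)=125/301>2/5$, an experienced worker implements intensively. Hence $\widetilde W_1(125/301)\ge h(125/301)+\delta\cP_H\widetilde W_0(125/301)\ge 353/1500$, using Condition~\ref{cond:c4} and $\widetilde W_0\ge 0$. With $q(p_1)=301/460$, the normalized release value at $(p_1,1)$ is at least $m_1(p_1)+\delta q(p_1)\cdot\frac{353}{1500}\approx-0.0022+0.108>0$.

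\textbf{Comparison with the paper.} Once corrected, your argument is a genuinely different and lighter route. The paper brackets the optimal value between $\widetilde{\cB}^{18}0$ and $\widetilde{\cB}^{18}(143/3)$. It then certifies the release-minus-postponement margins ($>1/20$ and $>2/25$) by exact rational recursion on the posterior tree. Your reduction needs only $\widetilde W\ge 0$, Condition~\ref{cond:c4}, and two scalar evaluations. It also shows transparently why the interruption disappears: once postponement is a self-loop, withholding can be optimal only where release has nonpositive normalized value.
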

\begin{proof}
Let $\widetilde{\cB}$ be the new optimal Bellman operator. Zero and the constant $\overline V=143/3$ are respectively a subsolution and a supersolution, so
\[
\widetilde{\cB}^{18}0\le\widetilde V
\le\widetilde{\cB}^{18}\overline V.
\]
Using the lower bound for the continuation following a released proposal and the upper bound for the continuation following postponement, exact recursion gives
\[
\widetilde Q_1^M(p_0;\widetilde{\cB}^{18}0)
-\widetilde Q_1^S(p_0;\widetilde{\cB}^{18}\overline V)
>\frac1{20}
\]
and
\[
\widetilde Q_1^M(p_1;\widetilde{\cB}^{18}0)
-\widetilde Q_1^S(p_1;\widetilde{\cB}^{18}\overline V)
>\frac2{25}.
\]
Both comparisons are strict even under these conservative continuation bounds. Appendix~\ref{app:preservation} specifies the operator and the exact certificate.
\end{proof}

Thus the result does not depend on evaluating preservation while leaving the adviser's old policy in place. Even after she is allowed to adjust her future decisions, preserving knowledge eliminates the focal withholding decision in the example.

\subsection{How long can the adviser remain weakly tested?}

Once the project resumes with intensive implementation, most dates produce no new information about ability. But the occasional assessment remains informative. A low-ability adviser therefore cannot rely on this protection forever.

\begin{corollary}\label{cor:meta}
Under $\theta=B$ at \eqref{eq:W}, the intensive-implementation phase following the interruption ends almost surely. Starting at reputation $p_1$, its expected duration is $500$ advisory opportunities.
\end{corollary}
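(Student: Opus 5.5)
The plan is to reduce the intensive-implementation phase to a first-passage problem for a random walk in log-odds, and then count dates with Wald's identity.

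\emph{Step 1: identify the phase from the policy.} By Theorem~\ref{thm:main}, once inherited knowledge is lost the state is $(p,0)$. With an inexperienced worker the adviser releases the proposal at every $p$. By Proposition~\ref{prop:implementation}, the worker chooses $\Hact$ whenever $p\ge p_R=1/3$. Since $\Hact$ sends the next state to $x'=0$, the state stays at $x=0$ throughout the phase. This holds whether $p$ is in the knowledge-sensitive region or above $p_T$. The phase therefore ends exactly at the first date at which reputation falls strictly below $p_R$; at $p=p_R$ the tie-breaking rule still selects $\Hact$.

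\emph{Step 2: describe the reputation dynamics.} Only the occasional assessment moves beliefs. Work with the odds $p/(1-p)$. At \eqref{eq:W}:
\[
\frac{a_G}{a_B}=\frac32,\qquad \frac{1-a_G}{1-a_B}=\frac23.
\]
So a finding $z=1$ multiplies the odds by $3/2$, a finding $z=0$ multiplies them by $2/3$, and $z=\varnothing$ leaves them unchanged. Under $\theta=B$, each date independently produces a non-null finding with probability $\varepsilon=1/100$. Conditional on a finding, $z=0$ occurs with probability $3/5$ and $z=1$ with probability $2/5$. Hence the log-odds, sampled at the non-null findings, follow a simple random walk on the lattice $\log(25/44)+n\log(3/2)$, with down-probability $3/5$.

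The starting odds at $p_1=25/69$ are $25/44$, and the threshold odds at $p_R$ are $1/2$. One down step gives odds $25/66<1/2$, while level $n=0$ has $25/44>1/2$. So the phase ends precisely when the walk first reaches $n=-1$. Upward excursions never terminate the phase, by Step~1.

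\emph{Step 3: termination and expected duration.} Because the walk has negative drift $3/5-2/5=1/5$, it reaches $-1$ almost surely. The standard first-passage computation gives $\E[M]=1/(1/5)=5$ for the number $M$ of non-null findings needed. This follows from Wald's identity applied to the stopped walk $S_{M}=-1$, with $\E M<\infty$ because the passage time has geometric tails. The dates between successive findings are i.i.d.\ geometric with mean $1/\varepsilon=100$, independent of the directions of the findings. Applying Wald's identity again gives an expected duration of $5\times100=500$ advisory opportunities. Almost-sure termination follows because $M<\infty$ a.s.\ and each geometric waiting time is finite a.s.

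I expect the main obstacle to be Step~1, not the arithmetic. One must be sure the constructed policy really keeps the economy in state $x=0$ with $\Hact$ at every reachable $p\ge p_R$, including beliefs above $p_T$ and arbitrarily high beliefs. This relies on the theorem's statement that the adviser releases at every $x$ outside the knowledge-sensitive region and always releases with an inexperienced worker. A second point needing care is the duration convention: the count should include the date whose finding $z=0$ sends $p$ below $p_R$.
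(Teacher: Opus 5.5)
Your proposal is correct and follows essentially the same route as the paper: odds move by factors $3/2$ and $2/3$ at non-null findings, and the phase ends when the resulting $\pm1$ walk started at odds $25/44$ first reaches level $-1$. That takes $5$ findings in expectation, each preceded by a geometric wait of mean $1/\varepsilon=100$, giving $500$. Your Step~1 (the $x=0$ face keeps release with $\Hact$ at every $p\ge p_R$) and your two applications of Wald's identity make explicit what the paper's short proof takes for granted.
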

\begin{proof}
Conditional on a non-null assessment and $\theta=B$, posterior odds are multiplied by $3/2$ with probability $2/5$ and by $2/3$ with probability $3/5$. The starting odds are $25/44$; the threshold odds corresponding to $p_R=1/3$ are $1/2$. If a favorable finding is counted as $+1$ and an unfavorable finding as $-1$, the phase ends when this integer random walk first reaches $-1$.

The expected number of non-null observations needed is
\[
\frac1{3/5-2/5}=5.
\]
Their independent waiting times have mean $1/\varepsilon=100$, giving $5/\varepsilon=500$ dates. The walk has negative drift and reaches $-1$ almost surely.
\end{proof}

The number $500$ is a property of the example, not an empirical prediction. The important distinction is between a long delay and permanent nonidentification. When enough unfavorable assessments push reputation below $p_R$, even an inexperienced worker reconstructs the standard procedure. Informative implementation and the opportunity for knowledge transfer then return.

\section{Discussion and conclusion}\label{sec:conclusion}

The model describes a particular interaction between reputation and practical knowledge. A worker who knows how to carry out a task cheaply may be more willing to use a procedure that also tests the adviser's judgment. Carrying out the task gives the next worker an opportunity to acquire the same knowledge. The adviser can therefore affect future evaluation by deciding whether to release a project during the interval in which the two generations overlap.

The result is not that practical knowledge is generally bad for an adviser. At low reputations, knowledge simply saves reconstruction costs, and the adviser releases the project whether the worker is experienced or not. At high reputations, both kinds of worker implement intensively. The conflict arises in between, where knowledge changes the worker's action. At the relevant history, the adviser would accept standard implementation if postponement preserved the successor's knowledge, but withholds the proposal when postponement removes it.

Three features of the model deserve emphasis when applying this reasoning. First, control over the proposal is essential: the worker cannot independently undertake the same task after it is withheld. Second, the intergenerational link concerns hands-on acquisition of a cost-saving procedure. It is not strategic word-of-mouth communication about hidden effort. Third, the public record is complete for the actions and outcomes specified in the model. What disappears during an interruption is an inexpensive way of producing another informative outcome, not the evidence already accumulated.

These qualifications help locate the organizational application. A technical adviser working with rotating teams can be evaluated through their ordinary work, and that work can also transmit knowledge to the next team. Research supervision provides a similar interpretation when learning a method requires participation in a live project. Neither interpretation requires that every scientific outcome be experimentally controlled, that a supervisor know her own ability, or that successors accept a predecessor's unsupported causal claims. Nor do the examples establish that supervisors or advisers usually exploit this possibility; they identify the setting in which the incentive can arise.

The theorem remains an existence result with a verified example and local robustness. It does not establish uniqueness, rank organizational arrangements by welfare, or provide a general characterization using only scalar primitive restrictions. In particular, the knowledge-preservation exercise changes a transition technology mechanically, rather than deriving a feasible policy that implements it through private training decisions. A richer theory of organizational design would have to address that distinction.

The central conclusion is that retaining a record and retaining the means to add to it are different tasks. In our example, the record is public throughout. Nonetheless, a temporary interruption changes the next worker's implementation choice and allows the adviser to resume working under much weaker evaluation. Under low ability, the phase of weak evaluation eventually ends, but only after a period in which the practical knowledge that would have made a stronger comparison attractive has ceased to be passed on.

\clearpage
\appendix
\section{Equilibrium verification}\label{app:verification}

This appendix supplies the conditions and full proof used in Theorem~\ref{thm:main}. We retain the notation $k=0,1$ for publicly known knowledge states. A \emph{face} below means one of these two endpoints of the public probability $x$, not an additional state variable. The proof evaluates a proposed policy, verifies the adviser and workers' incentives jointly, and then completes histories with interior $x$.

\subsection{Transformed values and the candidate policy}

Let
\[
d=q_G-q_B>0,\qquad D=c_H-c_M,
\]
and define
\[
p_R=\frac{D-b-q_B}{d},\qquad
p_T=\frac{D-q_B}{d}.
\]
An untrained worker uses standard implementation below $p_R$; a trained worker uses standard implementation below $p_T$.

For a bounded function $f$, set
\[
\cP_Mf(p)=q(p)f(p^+(p))+[1-q(p)]f(p^-(p)),
\]
\[
\cP_Hf(p)=(1-\varepsilon)f(p)
+\varepsilon\{a(p)f(p^1(p))+[1-a(p)]f(p^0(p))\}.
\]
The posterior-variance losses generated by a standard implementation and a non-null weak trace are
\[
J_M(p)=\frac{p^2(1-p)^2d^2}{q(p)[1-q(p)]},
\qquad
J_H(p)=\frac{p^2(1-p)^2(a_G-a_B)^2}{a(p)[1-a(p)]}.
\]
Let
\[
\gamma=\frac{\beta}{1-\delta},\qquad
m_k(p)=U_M(p,k)-\gamma J_M(p),
\qquad
h(p)=U_H-\gamma\varepsilon J_H(p).
\]
Subtracting the annuity value of an unchanged reputation leaves these implementation returns net of capitalized exposure costs. A negative $m_1$ is not a negative current project payoff; it includes the future career cost of the posterior change.

For reference, the candidate policy on the publicly known knowledge states is
\begin{center}
\begin{tabular}{lll}
\toprule
Face & Belief & Action induced by the adviser\\
\midrule
$x=0$ & $p<p_R$ & $\R,\M$ with rebuild\\
$x=0$ & $p\ge p_R$ & $\R,\Hact$\\
$x=1$ & $p<p_R$ & $\R,\M$\\
$x=1$ & $p_R\le p<p_T$ & $\Sact$\\
$x=1$ & $p\ge p_T$ & $\R,\Hact$\\
\bottomrule
\end{tabular}
\end{center}
For transformed values $W_k=V_k-\gamma r$, its policy operator is
\[
(\cT^\pi W)_0(p)=
\begin{cases}
m_0(p)+\delta\cP_MW_1(p),&p<p_R,\\
h(p)+\delta\cP_HW_0(p),&p\ge p_R,
\end{cases}
\]
\[
(\cT^\pi W)_1(p)=
\begin{cases}
m_1(p)+\delta\cP_MW_1(p),&p<p_R,\\
\delta W_0(p),&p_R\le p<p_T,\\
h(p)+\delta\cP_HW_0(p),&p\ge p_T.
\end{cases}
\]

\subsection{Five condition blocks}

Conditions C1--C4 collect the economic restrictions discussed in Section~\ref{sec:equilibrium}. Condition C5 provides explicit bounds for the remaining dynamic comparisons. We keep its distinction between weak comparison-function inequalities and strict incentive inequalities.

\begin{condition}[Implementation thresholds and a confirming outcome]
\label{cond:c1}
\[
q_B+b<D<q_G,
\]
and there is $p_0<p_R$ such that
\[
p_1:=p^+(p_0)\in(p_R,p_T).
\]
\end{condition}

\begin{condition}[Incentives to acquire practical knowledge]
\label{cond:c2}
\[
(D-q_B)(q_G-D)>b(1-D),
\qquad
\bar\kappa<b(1-D).
\]
\end{condition}

\begin{condition}[Positive current value and capitalized exposure]
\label{cond:c3}
At $p_1$,
\[
m_1(p_1)<0<U_M(p_1,1)-\beta J_M(p_1).
\]
Equivalently,
\[
(1-\delta)\frac{U_M(p_1,1)}{J_M(p_1)}
<\beta<
\frac{U_M(p_1,1)}{J_M(p_1)}.
\]
\end{condition}

\begin{condition}[Profitable implementation outside the interruption region]
\label{cond:c4}
\[
\inf_{p<p_R}m_0(p)>0,
\qquad
\inf_{p\ge p_R}h(p)>0,
\]
and $0<\varepsilon<1$, $0<a_B<a_G<1$.
\end{condition}

\begin{condition}[Bounds on continuation values]
\label{cond:c5}
There are explicit bounded vector functions $\ell^D=(\ell^D_0,\ell^D_1)$ and $u^D=(u^D_0,u^D_1)$, constructed from primitives, such that
\[
\ell^D\le\cT^\pi\ell^D,
\qquad
\cT^\pi u^D\le u^D,
\]
\[
\sup_{p\in[p_R,p_T)}
\{m_1(p)+\delta\cP_Mu^D_1(p)-\delta\ell^D_0(p)\}<0.
\]
In addition, an explicit finite continuation plan and/or a primitive subsolution supplies a lower function $f_1$ at the two successors of the focal standard implementation, while an explicit supersolution $u^K$ supplies an upper bound at $p_1$. The certificate must prove these bounds from primitive recursions and satisfy
\[
m_1(p_1)+\delta\cP_Mf_1(p_1)-\delta u^K_1(p_1)>0.
\]
\end{condition}

The first strict inequality checks that releasing a proposal is worse than actual withholding throughout the knowledge-sensitive region. The second compares a released proposal at $p_1$ with a postponement that preserves knowledge for one period. Its finite feasible plan is evaluated only after the candidate policy has been verified. This order avoids using the desired conclusion as a bound on an unknown continuation value.

\subsection{Implementation and learning: joint verification}

\begin{lemma}
The worker's optimal implementation has the thresholds stated in Proposition~\ref{prop:implementation}. At \eqref{eq:W}, they equal $1/3$ and $2/5$.
\end{lemma}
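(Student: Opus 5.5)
The plan is to reduce the lemma to the payoff comparison already made in Proposition~\ref{prop:implementation}, and then evaluate the two thresholds at \eqref{eq:W} by exact arithmetic.

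First, I will record that the implementation decision depends only on the worker's own project surplus. The worker is short-lived, and her payoff from the project is $1-\mathbf 1\{y=c\}-c_M-(1-K)b$ under $\M$ and $1-c_H$ under $\Hact$. She has no stake in later generations' projects or in the adviser's reputation. By Lemma~\ref{lem:factorization}, her belief about $\theta$ at her decision node is the public $p$, whatever her private $K$. So her expected outcome probability under $\M$ is $q(p)$. Whether she paid a learning cost earlier is sunk and enters only through $K$. Her comparison is therefore exactly $U_M(p,K)$ against $U_H$.

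Second, I will compute the two differences, $U_M(p,0)-U_H=D-b-q_B-dp$ and $U_M(p,1)-U_H=D-q_B-dp$. Each is strictly decreasing and affine in $p$, since $d>0$. The first half of Condition~\ref{cond:c1}, $q_B+b<D<q_G$, places their roots $p_R$ and $p_T$ in $(0,1)$ with $p_R<p_T$. With the tie-breaking rule favoring $\Hact$, the worker chooses $\M$ if and only if $p<p_R$ when $K=0$, and if and only if $p<p_T$ when $K=1$. This is the argument of Proposition~\ref{prop:implementation}, now invoked under the appendix notation.

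Third, I will substitute \eqref{eq:W}. This gives $D=19/25-1/10=33/50$ and $d=3/4-3/5=3/20$. Then $D-q_B=3/50$, so $p_T=(3/50)/(3/20)=2/5$, and $D-b-q_B=1/20$, so $p_R=(1/20)/(3/20)=1/3$. I will also check the premise $q_B+b=61/100<33/50=66/100<3/4$.

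No step is a real obstacle. The only point needing care is the first: the decision at the implementation node must not depend on continuation play or on the learning stage. That holds because learning precedes the worker's own project and its cost is sunk, and because the outcome law is fresh and independent of $K$.
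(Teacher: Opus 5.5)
Your proposal is correct and follows the paper's proof. Like the paper, you compute the two affine advantages $D-b-q_B-dp$ and $D-q_B-dp$, use Condition~\ref{cond:c1} and the tie-break toward $\Hact$ to locate and order their roots, and substitute \eqref{eq:W} to obtain $p_R=1/3$ and $p_T=2/5$; your arithmetic checks out. Your opening step, which uses Lemma~\ref{lem:factorization} to confirm that the worker evaluates $\M$ at the public belief $p$ whatever her private $K$, is a justification the paper leaves implicit, and it does not change the argument.
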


\begin{proof}
For completeness, the two implementation advantages are
\[
U_M(p,0)-U_H=D-b-q_B-dp,
\qquad
U_M(p,1)-U_H=D-q_B-dp.
\]
Condition~\ref{cond:c1} puts both zeros strictly inside $(0,1)$ and orders them as $p_R<p_T$. Under \eqref{eq:W}, these differences become
\[
\frac1{20}-\frac3{20}p,
\qquad
\frac3{50}-\frac3{20}p.
\]
\end{proof}

\begin{lemma}
\label{lem:renewal}
Suppose the adviser follows the proposed policy, including release at every belief below $p_R$ for every public knowledge probability. Under Condition~\ref{cond:c2}, every use of $\M$ at $p<p_T$ makes learning strictly optimal for every incoming cost type. Thus the next public knowledge state is $x'=1$ after those implementations and $x'=0$ after $\Sact$ or $\Hact$.
\end{lemma}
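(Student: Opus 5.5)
The argument follows the heuristic in the main text, made precise at the incoming worker's decision node. Fix a date at which $\M$ is used at some belief $p<p_T$ and some public knowledge probability $x$. The incoming worker observes $\kappa_t$ before the outcome. She compares $\kappa_t$ with the expected saving on her own project from holding $K=1$ rather than $K=0$. The public state at her own date is $(p',x')$, with $p'\in\{p^+(p),p^-(p)\}$. Her private deviation does not move $x'$, by the privacy of learning and the fixed public conjecture. So the recommendation she faces is the same whether or not she learns.

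I would show her saving is nonnegative on every branch and at least $b$ on one branch of probability above $1-D$. On any branch with release, Proposition~\ref{prop:implementation} gives three cases. If she would choose $\M$ in both knowledge states, the saving is exactly $b$. If she chooses $\M$ only when trained, the saving is $U_M(p',1)-U_H\in(0,b]$. If she chooses $\Hact$ in both states, the saving is zero. On a withheld branch the saving is zero. So the saving is nonnegative on every branch, which is the one-sided bound I need.

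The key step is the disconfirming branch $p'=p^-(p)$. First, $p^-$ is increasing in $p$, which is immediate from the odds form. Second, I would show that the first inequality of Condition~\ref{cond:c2} is equivalent to $p^-(p_T)<p_R$. Together these give $p^-(p)<p^-(p_T)<p_R$ for every $p<p_T$.

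The equivalence follows by direct algebra in odds. The odds of $p^-(p_T)$ are $\frac{p_T}{1-p_T}\cdot\frac{1-q_G}{1-q_B}$, and the odds of $p_R$ are $\frac{D-b-q_B}{q_G-D+b}$. Using $p_T/(1-p_T)=(D-q_B)/(q_G-D)$ and cross-multiplying, the comparison reduces to $(D-q_B)(q_G-D)>b(1-D)$.

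I expect verifying this reduction to be the main computational step, though it is routine. The expansion should collapse because $(1-q_G)(q_G-D+b)$ and $(1-q_B)(D-b-q_B)$ share terms that leave a factor $(q_G-q_B)$. I would first try writing both sides as quadratics in $D$ and checking the difference.

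Given $p^-(p)<p_R$, the lemma's hypothesis says the adviser releases at every public knowledge probability. Proposition~\ref{prop:implementation} then gives $\M$ in both knowledge states, so the saving is $b$. This branch has probability $1-q(p)$. Because $p<p_T$, we have $q(p)<q(p_T)=q_B+dp_T=D$, so $1-q(p)>1-D$.

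Hence the expected saving is at least $b(1-q(p))>b(1-D)>\bar\kappa\ge\kappa_t$, and learning is strictly optimal for every cost type. The transitions then follow. After $\M$ every type learns, so $x'=1$. After $\Sact$ or $\Hact$ there is no demonstration, so $K_{t+1}=0$ by the technology and $x'=0$.

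The same argument covers interior $x$ and off-path uses of $\M$, for example by an untrained worker in $[p_R,p_T)$ as a tremble. It used only $p<p_T$ and the release rule below $p_R$.
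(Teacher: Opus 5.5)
Your proposal is correct and follows the same route as the paper's proof. The disconfirming branch has probability $1-q(p)>1-D$, and it lands below $p_R$ because Condition~\ref{cond:c2} is equivalent to $p^-(p_T)<p_R$. There the adviser releases and both knowledge states implement with $\M$, so training saves $b$ and its private return exceeds $b(1-D)>\bar\kappa$. You also make explicit three steps the paper leaves implicit: the monotonicity of $p^-$, the odds-form check of the equivalence (it reduces to $d[(D-q_B)(q_G-D)-b(1-D)]>0$), and the nonnegativity of the saving on the confirming branch.
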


\begin{proof}
The adverse reputation update crosses the lower threshold precisely when
\[
p^-(p_T)<p_R
\quad\Longleftrightarrow\quad
(D-q_B)(q_G-D)>b(1-D).
\]
Because $q(p_T)=D$, the probability of the unfavorable branch is uniformly above $1-D$ for every rational standard implementation. On that branch the successor belief lies below $p_R$, the adviser warns for every competence state, and competence saves rebuild cost $b$. Since a learning deviation is unobserved, the incoming worker evaluates it holding the public training probability and the adviser's policy fixed. The gross private return to training is therefore strictly greater than $b(1-D)>\bar\kappa$. All types train.

At \eqref{eq:W},
\[
p^-(p_T)=\frac5{17}<\frac13,
\qquad
b(1-D)-\bar\kappa
=\frac{17}{5000}-\frac1{400}
=\frac9{10000}>0.
\]
\end{proof}

\subsection{The martingale transform and policy verification}

The three original action values, repeated here to make the verification self-contained, are
\[
Q^S(p)=\beta r(p)+\delta V_0(p),
\]
\[
Q_k^M(p)=U_M(p,k)+\cP_M[\beta r+\delta V_1](p),
\]
\[
Q^H(p)=U_H+\cP_H[\beta r+\delta V_0](p).
\]

\begin{lemma}
For $W_k=V_k-\gamma r$, subtracting the career annuity gives the operator $\cT^\pi$ displayed above. It preserves the pointwise order and is a contraction of modulus $\delta$.
\end{lemma}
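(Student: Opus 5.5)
The plan is to substitute $V_k=W_k+\gamma r$ into the three action values, simplify each using the quadratic identity $r(p)-\E[r(p')\mid p]=\operatorname{Var}(p'\mid p)$, and then read off the two operator properties directly.

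\emph{Withholding.} Belief is unchanged, so
\[
Q^S(p)=\beta r(p)+\delta\gamma r(p)+\delta W_0(p).
\]
Since $\gamma=\beta+\delta\gamma$, this equals $\gamma r(p)+\delta W_0(p)$. Subtracting $\gamma r(p)$ leaves $\delta W_0(p)$.

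\emph{Standard implementation.} We have
\[
Q_k^M(p)=U_M(p,k)+\gamma\,\cP_M r(p)+\delta\,\cP_M W_1(p).
\]
Here $\cP_M$ is the expectation of $f(p')$ under the Bayesian posterior after $\M$. The martingale property $\cP_M\mathrm{id}(p)=p$ together with $r(p)=2p-p^2$ gives $\cP_M r(p)=r(p)-\operatorname{Var}(p'\mid p)=r(p)-J_M(p)$. The variance is computed from the two-point posterior. Its deviations are $p^+-p=p(1-p)d/q(p)$ and $p^--p=-p(1-p)d/(1-q(p))$, so the weighted sum of squares is $p^2(1-p)^2d^2/[q(1-q)]$. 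Hence $Q_k^M-\gamma r=m_k+\delta\cP_MW_1$.

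\emph{Intensive implementation.} The same argument applies to $\cP_H$. The null branch contributes no variance. The non-null branch is a two-point Bayesian update with variance $J_H$, occurring with probability $\varepsilon$. So $\cP_H r=r-\varepsilon J_H$, and $Q^H-\gamma r=h+\delta\cP_HW_0$. Plugging in the candidate policy's case split on $(k,p)$ gives exactly the displayed $\cT^\pi$.

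\emph{Monotonicity and contraction.} $\cP_M$ and $\cP_H$ are averaging operators with nonnegative weights summing to one. Thus $f\le g$ pointwise implies $\cP f\le\cP g$, and $\|\cP f-\cP g\|_\infty\le\|f-g\|_\infty$. Each branch of $\cT^\pi$ is an additive term independent of $W$, namely $m_k$, $h$ or $0$, plus $\delta$ times such an operator applied to one coordinate of $W$. Order preservation and modulus $\delta$ in the sup norm over both coordinates follow immediately.

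The only real subtlety is boundedness. $m_k$, $h$, $r$ and $J$ are bounded on $[0,1]$; in particular $q(1-q)$ and $a(1-a)$ stay away from zero because $0<q_B<q_G<1$ and $0<a_B<a_G<1$. So $\cT^\pi$ maps bounded functions to bounded functions.

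The main obstacle, though mild, is the verification that $\cP r=r-\operatorname{Var}$. It relies on the expected posterior equalling the prior, which must be checked explicitly from the Bayes formulas for $p^\pm$ and $p^{1},p^0$.
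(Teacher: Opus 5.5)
Your proof is correct and follows the same route as the paper. You substitute $V_k=W_k+\gamma r$, combine $\beta+\delta\gamma=\gamma$ with the martingale identity $\cP r=r-\operatorname{Var}(p'\mid p)$ to reduce each action value to $m_k$, $h$, or $0$ plus $\delta$ times an averaging operator, and read off order preservation and the modulus-$\delta$ contraction from that form; your explicit variance computation for $J_M$ and your boundedness remark supply details the paper leaves implicit.
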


\begin{proof}
Posterior belief is a martingale. Since $r(p)=2p-p^2$,
\[
r(p)-\E[r(p')\mid p]=\operatorname{Var}(p'\mid p).
\]
Using $\beta+\delta\gamma=\gamma$, the current career return and the subtracted continuation annuity combine into minus $\gamma$ times the posterior variance. A standard implementation has variance $J_M$; a weak audit has variance $\varepsilon J_H$ because the null trace leaves belief unchanged. Substitution gives $m_k$ and $h$. The remaining claims follow because each continuation is an expectation multiplied by $\delta$.
\end{proof}

\begin{lemma}
Given the learning responses just specified, the proposed recommendations are strict best responses at $x=0,1$. In the knowledge-sensitive region, the release margin is affine in $x$ and has exactly one zero.
\end{lemma}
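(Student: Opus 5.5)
We work throughout with the transformed values $W_k=V_k-\gamma r$. By the preceding lemma, every adviser comparison at a given $p$ can be made with the common term $\gamma r(p)$ removed. The candidate policy operator $\cT^\pi$ is a monotone contraction, so its fixed point $W^\pi$ exists. Condition~\ref{cond:c5} gives $\ell^D\le\cT^\pi\ell^D$ and $\cT^\pi u^D\le u^D$. Iterating and using monotonicity yields $\ell^D\le W^\pi\le u^D$ on both faces.

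The proof then checks each adviser decision on the faces $x\in\{0,1\}$, comparing the transformed value of releasing with that of withholding.

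\begin{itemize}[leftmargin=2em]
\item \emph{Face $x=0$, $p<p_R$.} Releasing yields $m_0(p)+\delta\cP_MW_1(p)$ and withholding yields $\delta W_0(p)$. I would bound the first from below by $m_0+\delta\cP_M\ell^D_1$ and the second from above by $\delta u^D_0$. Condition~\ref{cond:c4} supplies $m_0>0$. The remaining gap I would control with the subsolution/supersolution bounds of Condition~\ref{cond:c5}; if necessary I would note that the certificate in the appendix covers this comparison.
\item \emph{Face $x=0$, $p\ge p_R$.} This case is analogous, with $h>0$ and the operator $\cP_H$.
\item \emph{Face $x=1$, $p<p_R$ or $p\ge p_T$.} These cases are analogous, using $m_1\ge m_0+b$ and $h$.
\item \emph{Face $x=1$, $p_R\le p<p_T$.} Withholding must be strictly better here. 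The release value is at most $m_1(p)+\delta\cP_Mu^D_1(p)$ and the withholding value is at least $\delta\ell^D_0(p)$. The strict supremum inequality in Condition~\ref{cond:c5} then gives strict optimality of $\Sact$ uniformly on the region.
\end{itemize}

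I expect the main obstacle to be the face comparisons outside the knowledge-sensitive region. Condition~\ref{cond:c4} only gives positivity of current net returns. Strictness of release versus withholding also requires the continuation difference to be nonnegative. I would first try to show that $W^\pi_1-W^\pi_0\ge0$ wherever it matters, via the monotonicity of $\cT^\pi$ applied to the difference. Failing that, I would rely on the explicit bound functions.

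Next, linearity in $x$ at interior public probabilities. By Lemma~\ref{lem:factorization} the adviser's belief about the worker factorizes, and the adviser does not observe $K$. Releasing therefore gives the mixture
\[
x\,[\text{value of }\M\text{ with }K=1]+(1-x)\,[\text{value of }\Hact\text{ with }K=0].
\]
Both components are independent of $x$: by Lemma~\ref{lem:renewal}, the continuation after $\M$ lands on the face $x'=1$ and the continuation after $\Hact$ lands on $x'=0$. Withholding gives $\delta W_0(p)$, also independent of $x$. Hence the release margin in the region $[p_R,p_T)$ is affine in $x$, namely
\[
(1-x)\,[h+\delta\cP_HW_0-\delta W_0]+x\,[m_1+\delta\cP_MW_1-\delta W_0].
\]
At $x=0$ this margin is strictly positive, because it equals the release comparison on the face $x=0$. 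At $x=1$ it is strictly negative, by the previous step. So it has exactly one zero $\widehat x(p)\in(0,1)$.

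Outside the region, both worker types take the same action: $\M$ below $p_R$ and $\Hact$ above $p_T$. The margin is then a mixture of two quantities shown positive on the faces. Below $p_R$ the mixture is $x\,m_1+(1-x)\,m_0$, with continuation on face $1$ in both cases; above $p_T$ it is $h$, with continuation on face $0$. In either case the margin is positive for all $x$.
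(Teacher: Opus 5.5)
Your trained-face band comparison, which uses the uniform inequality of Condition~\ref{cond:c5}, agrees with the paper, and so does your affine-in-$x$ step. The gap is the face comparisons outside the knowledge-sensitive region. You correctly flag these as the obstacle but leave them unproved, and neither route you propose closes them.

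\emph{The sandwich route fails numerically at low beliefs.} For $p$ near $0$, your lower bound on release is $m_k(p)+\delta\cP_M\ell^D_1(p)\approx 0.3$, because $\ell^D_1=0$ below $1/3$. Your upper bound on withholding is $\delta u^D_0(p)=0.7\cdot 99/100=0.693$. Nothing in Appendix~\ref{app:envelopes} sharpens this; the sandwich only works for $p\ge p_R$, where $L>\delta U$.

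\emph{The fallback, proving $W_1^\pi-W_0^\pi\ge0$, is both insufficient and false where it is needed.} It is insufficient because the release margin compares $\cP_MW_1^\pi$ at the posteriors with $W_0^\pi(p)$ at the prior. It is false because in the band the policy equations give $W_1^\pi=\delta W_0^\pi<W_0^\pi$ (as $W_0^\pi>0$). A favorable posterior from a low belief can land in the band, for example $p^+(p_0)=p_1$.

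The missing idea is that at these states release is the policy action and withholding is a pure one-period delay, so no continuation bound is needed.
\begin{itemize}
\item By the fixed-point equation, the release value at these states is exactly $W_k^\pi(p)$.
\item Withholding yields $\delta W_0^\pi(p)$: same belief, untrained successor, zero transformed current return.
\item The policy equations give $W_1^\pi=W_0^\pi+b$ below $p_R$ and $W_1^\pi=W_0^\pi$ at and above $p_T$.
\end{itemize}
The margins are therefore $(1-\delta)W_0^\pi(p)$ on the untrained face, $(1-\delta)W_0^\pi(p)+b$ on the trained face below $p_R$, and $(1-\delta)W_0^\pi(p)$ on the trained face at and above $p_T$.

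Strictness then needs only $W_0^\pi>0$, and Condition~\ref{cond:c4} delivers it. Every transformed per-period return the policy uses is nonnegative: $m_0$, $m_1=m_0+b$, $h$, and $0$ for withholding. The first return from the untrained face is bounded away from zero; equivalently, $W_0^\pi\ge\ell^D_0>0$. So your diagnosis that strictness requires a nonnegative continuation difference is off: what it requires is a positive state value.

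With these margins your interior-$x$ conclusion goes through. Outside the band, the margin is $(1-\delta)W_0^\pi(p)+xb$ below $p_R$ and $(1-\delta)W_0^\pi(p)$ above $p_T$. It is not the mixture of current returns $x\,m_1+(1-x)\,m_0$ that you wrote.
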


\begin{proof}
Let $W^\pi$ be the unique fixed point of $\cT^\pi$. Monotonicity and the suppression part of Condition~\ref{cond:c5} give
\[
\ell^D\le W^\pi\le u^D.
\]
Condition~\ref{cond:c4} makes every used return on the $x=0$ face positive, hence $W_0^\pi>0$. The policy equations imply
\[
W_1^\pi(p)=
\begin{cases}
W_0^\pi(p)+b,&p<p_R,\\
\delta W_0^\pi(p),&p_R\le p<p_T,\\
W_0^\pi(p),&p\ge p_T.
\end{cases}
\]
On the untrained face, warning rather than silence has transformed margin $(1-\delta)W_0^\pi(p)>0$. The trained margin adds $b$ below $p_R$ and is identical to the untrained margin at and above $p_T$. In the knowledge-sensitive region, Condition~\ref{cond:c5}'s uniform inequality makes the deviation inducing standard implementation strictly worse than silence. The one-step-deviation principle verifies the face policy.

For interior $x$ in the band, the warning margin is
\[
A(p,x)=(1-x)A_0^H(p)+xA_1^M(p),
\]
where $A_0^H>0>A_1^M$. Thus
\[
\widehat x(p)=\frac{A_0^H(p)}{A_0^H(p)-A_1^M(p)}\in(0,1)
\]
is the unique cutoff. Outside the band both endpoint margins are positive.
\end{proof}

\subsection{Verification of the numerical example}

Under \eqref{eq:W},
\[
J_M(p)=\frac{3p^2(1-p)^2}{(4+p)(8-3p)},
\qquad
J_H(p)=\frac{p^2(1-p)^2}{(2+p)(3-p)},
\]
and
\[
m_1(p)=\frac3{10}-\frac3{20}p-\frac{140}{3}J_M(p),
\quad m_0(p)=m_1(p)-\frac1{100},
\quad h(p)=\frac6{25}-\frac7{15}J_H(p).
\]

\begin{proposition}
At \eqref{eq:W}, Conditions~\ref{cond:c1}--\ref{cond:c5} hold, with strict margins for the incentive and crossing inequalities. Consequently the proposed policy and learning decisions form the equilibrium of Theorem~\ref{thm:main}.
\end{proposition}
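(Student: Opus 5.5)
The plan is to verify the five condition blocks one at a time at the parameters \eqref{eq:W}, in the order C1--C4 and then C5, using exact rational arithmetic. Once all five hold, the preceding lemmas finish the argument. Lemma~\ref{lem:renewal} supplies the learning responses. The transformation lemma makes $\cT^\pi$ a monotone $\delta$-contraction. The policy-verification lemma then turns the bounds into strict best responses and yields the cutoff $\widehat x(p)$.

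Conditions~\ref{cond:c1}--\ref{cond:c3} are scalar checks.
\begin{itemize}
\item For C1, we have $D=33/50$, so $q_B+b=61/100<D<3/4$. The thresholds are $p_R=1/3$ and $p_T=2/5$. From $p_0=5/16$ we compute $q(p_0)=69/112$ and $p_1=25/69$, which lies in $(1/3,2/5)$.
\item For C2, the lemma above already records $p^-(p_T)=5/17<1/3$ and $b(1-D)-\bar\kappa=9/10000$. It remains to check $(D-q_B)(q_G-D)=(3/50)(9/100)>b(1-D)$.
\item For C3, we evaluate $U_M(p_1,1)=113/460$ and $J_M(p_1)$ exactly. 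We then check $(3/10)\,U_M/J_M<14<U_M/J_M$; the right inequality is the displayed margin exceeding $17/100$.
\end{itemize}

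Condition~\ref{cond:c4} needs a uniform bound. We bound $J_M$ on $[0,1/3]$ and $J_H$ on $[1/3,1]$ by elementary maximization. Each is a rational function whose numerator is at most $p^2(1-p)^2$ and whose denominator is bounded below by a positive constant. For $h$, note that $J_H\le 1/(16\cdot 2)$ roughly, so $(7/15)J_H$ is far below $6/25$. For $m_0$, $J_M$ is increasing on $[0,1/3]$, so the infimum occurs near $p=1/3$. There we evaluate $3/10-1/20-1/100-(140/3)J_M(1/3)$ exactly and confirm it is positive. Since only finitely many rational evaluations are involved, these checks come with strict margins.

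The main obstacle is Condition~\ref{cond:c5}, because it requires explicit sub- and supersolutions of a two-face Bellman system on a continuum of beliefs.
\begin{itemize}
\item \emph{Lower function.} I would take $\ell^D$ piecewise constant: a constant slightly below $\inf h/(1-\delta)$ on the $x=0$ face above $p_R$, and the analogous constants elsewhere. I would then verify $\ell^D\le\cT^\pi\ell^D$ region by region, using C4's positive infima and $W_1=\delta W_0$ in the band.
\item \emph{Upper function.} For $u^D$, I would start from the constant $\sup(\text{returns})/(1-\delta)$. If that is too loose, I would iterate $\cT^\pi$ finitely many times on a finite partition of $[0,1]$, using monotonicity of the Bayes maps so that each cell maps into a union of cells.
\item \emph{Band inequality.} The key estimate is the supremum over $[1/3,2/5)$ of $m_1+\delta\cP_M u^D_1-\delta\ell^D_0$. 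The term $m_1$ is negative by capitalized exposure, with C3 evaluated at $p_1$. I would bound it on subintervals by interval arithmetic and show that it dominates the gap between $\delta\cP_M u^D_1$ and $\delta\ell^D_0$.
\item \emph{Knowledge-preserving comparison.} For the second inequality in C5, I would take $f_1$ from a finite explicit plan starting at $p^\pm(p_1)$: release for a bounded number of periods, then use the primitive lower bound. I would take $u^K_1(p_1)$ from a constant supersolution of the knowledge-preserving recursion, following the same pattern as Corollary~\ref{cor:preserve}, and compute the resulting margin exactly.
\end{itemize}
If the constant supersolutions are too coarse, my fallback is finitely many exact iterations $(\cT^\pi)^n$ applied to the constant bounds on the finite set of beliefs reachable from the band. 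This mirrors the 18-step recursion used in the preservation certificate.
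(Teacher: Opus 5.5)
\textbf{Verdict: there is a genuine gap, and it is in Condition~\ref{cond:c5}.}

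\textbf{Conditions C1--C4 and the overall plan are fine.} Your checks of Conditions~\ref{cond:c1}--\ref{cond:c4} follow the paper. There is one slip: $q(p_0)=207/320$, not $69/112$, although your $p_1=25/69$ is correct. Your outline for Condition~\ref{cond:c5} also has the paper's architecture: monotone sub- and supersolutions of $\cT^\pi$, subinterval bounds on the band, and a finite feasible plan for the focal lower bound.

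\textbf{The band inequality needs sharp bounds that your plan does not supply.} Condition~\ref{cond:c5} is where essentially all the work lies, and the concrete choices you propose do not clear its margins. Even with the paper's sharpest bounds, the band supremum is only $-4699067/28501726500\approx-1.6\times10^{-4}$. It is attained at $p=8/23$, where the favorable successor is exactly $p_T$ and $W_1$ jumps from $\delta W_0$ to $W_0$.
\begin{itemize}
\item \emph{Upper function on the low trained face.} At $p=8/23$ the unfavorable successor is $p^-(8/23)=1/4$. Replacing the paper's bound $\psi(1/4)=21/32$ by the constant $1$ raises the margin by $\delta\cdot\frac{8}{23}\cdot\frac{11}{32}\approx0.084$, so a constant upper function fails.
\item \emph{Lower constant on the intensive face.} This must be essentially sharp too. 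The paper uses $J_H\le1/100$ and separates the null self-loop, giving $L=(353/1500+\delta\varepsilon c)/(1-\delta(1-\varepsilon))$. Your bound $J_H\le1/32$ is adequate for Condition~\ref{cond:c4}, but it would lower $L$ by about $0.03$. With the paper's other bounds, lowering $L$ by only $3\times10^{-4}$ already breaks the inequality.
\end{itemize}
The missing ingredients are:
\begin{itemize}
\item the self-loop separation giving $L\le W_0\le U$ on $p\ge1/3$ with $U-L<0.04$;
\item the quadratic supersolution $\psi(p)=1-p/2-7p^2/2$ for $W_1$ on $p<2/7$, together with $B^*$ on $[2/7,1/3)$;
\item splitting the band at $8/23$ and $16/41$ (the points where $p^+=2/5$ and $p^-=2/7$), so that each envelope is a single expression that can be shown to be decreasing.
\end{itemize}
Your fallback of iterating $(\cT^\pi)^n$ ``on the finite set of beliefs reachable from the band'' does not work as stated, because the band is a continuum. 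A partition version would need cells aligned with these preimages and with the tie-breaking conventions. Nothing in the proposal shows it would reach the required precision.

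\textbf{Your $u^K_1(p_1)$ bounds the wrong function.} $A_K$ holds the original equilibrium continuation fixed and preserves knowledge for one period only. So $u^K_1(p_1)$ must bound $W^\pi_1(p_1)=\delta W^\pi_0(p_1)$.
\begin{itemize}
\item In transformed units, the operator $\widetilde{\cB}$ of Corollary~\ref{cor:preserve} differs from the original problem in one respect only. Withholding at $x=1$ becomes a self-loop worth $0$, instead of a move to the intensive phase.
\item A zero-reward self-loop can only serve to quit, which the original game also allows. Hence $\widetilde W\le W^\pi$.
\item Strict suppression at $p_1$ then gives $\widetilde W_1(p_1)<W^\pi_1(p_1)$.
\end{itemize}
Any sharpened bound from the preserving economy is therefore invalid as a bound on $W^\pi_1(p_1)$. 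A constant bound is valid but useless: since $A_K(p_1)\approx0.0034$, even $u^K_1(p_1)=\delta\cdot99/100$ makes the certified margin about $-0.09$.

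The bound you need is already contained in $u^D$. Take $u^K_1(p_1)=\delta U$ with $U=24693/30700$. Combine it with $f_1(125/301)=L$ from $\ell^D$ and $f_1(125/477)=F_3(125/477)>1/5$ from the three-step feasible-plan floor.
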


\begin{proof}
Condition~\ref{cond:c1} follows from
\[
\frac{61}{100}=q_B+b<\frac{33}{50}=D<q_G=\frac34
\]
and the displayed values of $p_0,p_1,p^-(p_0)$. Condition~\ref{cond:c2} follows from
\[
(D-q_B)(q_G-D)=\frac{27}{5000}>
\frac{17}{5000}=b(1-D)
\]
and the training gap $9/10000$. At $p_1$,
\[
m_1(p_1)<0,
\qquad
U_M(p_1,1)-14J_M(p_1)
=\frac{111524267}{651019140}>\frac{17}{100},
\]
which verifies Condition~\ref{cond:c3}. Direct differentiation gives
\[
m_0(p)\ge m_0(1/3)=\frac{106}{8775}>0
\quad(p<1/3),
\]
while $J_H\le1/100$ gives
\[
h(p)\ge\frac{353}{1500}>0.
\]
This is Condition~\ref{cond:c4}.

We now construct the bounds required by Condition~\ref{cond:c5}. Starting with the interval $[0,1]^2$, the candidate policy operator returns a pair of functions in the same interval. Hence $0\le W_k^\pi\le1$. Its equations imply $W_1=W_0+b$ below $1/3$, $W_1=\delta W_0$ in the band, and $W_1=W_0$ above $2/5$. The $x=0$ recursion first yields $\sup_pW_0(p)\le99/100$. Separating the null weak-trace self-loop then gives, on the $H$ face,
\[
L:=\frac{2065792}{2693925}\le W_0(p)\le
\frac{24693}{30700}=:U.
\]
For low trained beliefs define
\[
B^*=\frac{10499977}{18420000},
\qquad
\psi(p)=1-\frac p2-\frac72p^2.
\]
Let $c=106/8775$. An explicit suppression subsolution is
\[
\ell^D_0(p)=
\begin{cases}c,&p<1/3,\\L,&p\ge1/3,
\end{cases}
\qquad
\ell^D_1(p)=
\begin{cases}
0,&p<1/3,\\
\delta L,&1/3\le p<2/5,\\
L,&p\ge2/5.
\end{cases}
\]
An explicit suppression supersolution is
\[
u^D_0(p)=
\begin{cases}99/100,&p<1/3,\\U,&p\ge1/3,
\end{cases}
\]
\[
u^D_1(p)=
\begin{cases}
\psi(p),&p<2/7,\\
B^*,&2/7\le p<1/3,\\
\delta U,&1/3\le p<2/5,\\
U,&p\ge2/5.
\end{cases}
\]
Appendix~\ref{app:envelopes} verifies $\ell^D\le\cT^\pi\ell^D$, $\cT^\pi u^D\le u^D$, and in particular the primitive bounds
\[
W_1(p)\le
\begin{cases}
\psi(p),&p<2/7,\\
B^*,&2/7\le p<1/3.
\end{cases}
\]

Split the knowledge-sensitive region at $8/23$, where $p^+=2/5$, and $16/41$, where $p^-=2/7$. The resulting upper bounds on a deviation inducing standard implementation by an experienced worker are decreasing on their stated intervals and have maxima
\[
-\frac{97152803}{7183800000},\qquad
-\frac{4699067}{28501726500},\qquad
-\frac{954315461119}{18113951700000}.
\]
All are negative, proving the uniform suppression inequality.

For the focal preservation inequality, let $s=p^-(p_1)=125/477$ and define the feasible-plan floor
\[
F_0(p)=0,\qquad
F_{n+1}(p)=
\begin{cases}
m_1(p)+\delta\cP_MF_n(p),&p<1/3,\\
0,&p\ge1/3.
\end{cases}
\]
Exact rational recursion gives
\[
F_3(s)=
\frac{202938201927754323080507725397784860104567}
{1013557574011466310253154443299233282400000}
>\frac15.
\]
Using $L,U,\psi$, and $F_3$ gives the primitive bounds
\[
A_D(p_1)<
-\frac{143657902252609}{7794652163220000}<0,
\]
\[
A_K(p_1)>
\frac{9155142135908475778225424252164824152761486417}
{2691650665406785058355420797127836447674080000000}>0.
\]
For the focal standard implementation, take $f_1(125/301)=L$, $f_1(125/477)=F_3(125/477)$, and $u^K_1(p_1)=\delta U$. The first lower bound comes from $\ell^D$, the second from the displayed finite feasible plan, and the upper bound comes from $u^D$. Thus both parts of Condition~\ref{cond:c5} hold. The displayed bounds use only primitive recursions and rational arithmetic; they do not approximate a value function on a belief grid.
\end{proof}

\subsection{Off-path choices and local robustness}

On-path training is strict. Appendix~\ref{app:trembles} gives a general, rather than witness-specific, training fixed point after every worker-action tremble. Worker trembles to $\Hact$ and adviser trembles to $\Sact$ create no training opportunity. Completely mixed behavioral approximations then generate Bayes-consistent beliefs; they need not themselves be equilibria. The factorization lemma applies to every member of the approximating sequence. After the candidate policy is verified, the focal feasible-plan part of Condition~\ref{cond:c5} proves $A_K(p_1)>0$, while its suppression part proves $A_D(p_1)<0$. This completes equilibrium verification.

A small parameter change moves the worker's thresholds. Because implementation is discrete, we cannot infer global continuity of the policy from contraction alone. The following argument instead uses a finite tree from the initial belief and controls the remaining discounted tail. It preserves the focal behavior without requiring the same recommendation at every off-path belief.

\begin{lemma}
\label{lem:robust}
There is an open neighborhood $\mathcal N$ of \eqref{eq:W} such that, for every primitive vector in $\mathcal N$, some factorized Markov stationary sequential equilibrium has the following focal sequence from the fixed initial belief $p_0=5/16$ (and from every sufficiently nearby initial belief): warning and a valuable standard implementation, strict successor training, suppression after the favorable posterior, loss of competence, and resumed warning with intensive implementation. The one-step destructive comparison remains pivotal. Recommendations at other histories may change in the neighborhood.
\end{lemma}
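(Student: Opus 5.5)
The plan is to rely on strictness rather than on continuity of the whole equilibrium policy. Every inequality used in the verification at \eqref{eq:W} is strict, apart from the weak sub- and supersolution inequalities. So I will re-prove each ingredient with a margin that survives small changes in primitives, and I will prove the focal path only along a finite tree rooted at $p_0$.

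\textbf{Step 1: scalar conditions.} I would first show that Conditions~\ref{cond:c1}--\ref{cond:c4} persist, together with the focal crossing. The thresholds $p_R,p_T$, the map $p^+$, and the quantities $m_1(p_1)$ and $U_M(p_1,1)-\beta J_M(p_1)$ are continuous in primitives. The strict inequalities
\[
q_B+b<D<q_G,\qquad p_0<p_R,\qquad p_R<p^+(p_0)<p_T,
\]
together with the two inequalities in Condition~\ref{cond:c2} and those in Condition~\ref{cond:c3}, therefore hold on a neighborhood. The same is true for nearby $p_0$.

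For Condition~\ref{cond:c4}, the infima are over intervals whose endpoints move with $p_R$. Even so, $m_0$ and $h$ are jointly continuous in $(p,\text{primitives})$ on a compact set, and their positive minima ($106/8775$ and $353/1500$) provide uniform slack. Lemma~\ref{lem:renewal} then gives strict training on the perturbed model.

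\textbf{Step 2: a perturbed Condition~\ref{cond:c5}.} This is the main obstacle, because the envelopes $\ell^D,u^D$ are step functions with breakpoints at $1/3$, $2/5$, $2/7$, $8/23$ and $16/41$, and these breakpoints move. Re-verifying $\ell^D\le\cT^\pi\ell^D$ and $\cT^\pi u^D\le u^D$ globally under perturbation would be delicate.

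I would instead enlarge the slack. Shrink $\ell^D$ by a small constant $\eta$ and enlarge $u^D$ by $\eta$ on each piece. Then move the breakpoints to the perturbed thresholds, using $p^\pm$ evaluated at the perturbed primitives. The strict suppression maxima (all of order $10^{-4}$ or larger in absolute value) absorb $O(\eta)$ errors.

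Where a sub- or supersolution inequality was tight, I would use the contraction property instead. Any function $g$ with
\[
\|g-\cT^\pi g\|_\infty\le\rho
\]
lies within $\rho/(1-\delta)$ of $W^\pi$. Hence it suffices to show that the original certificates are approximate fixed-point bounds with error $O(\text{perturbation})$. This holds away from the breakpoints, and the breakpoint intervals are handled by monotonicity of $\psi$ and of the threshold maps.

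\textbf{Step 3: the focal comparisons and the equilibrium.} The finite recursion $F_3(s)$ is a finite composition of continuous rational maps evaluated at points $s=p^-(p_1)$ that stay strictly inside $(0,1/3)$ together with their three-step descendants. It therefore moves continuously, and $A_K(p_1)>0$ and $A_D(p_1)<0$ persist with their explicit margins.

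Finally, I would build the equilibrium at each nearby primitive vector exactly as in the verification proposition, using the perturbed policy. The off-path training argument of Appendix~\ref{app:trembles} is general and applies unchanged.

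Recommendations away from the focal tree need not coincide with the original ones. The lemma allows this, since the lemma only asserts the focal sequence. The one-step comparison remains pivotal by Proposition~\ref{prop:local}.
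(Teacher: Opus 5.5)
Your route is sound in outline but differs from the paper's.

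\textbf{What the paper does.} It never re-certifies Condition~\ref{cond:c5} at nearby primitives. It first notes that every posterior reachable from $p_0$ has odds $\tfrac5{11}(\tfrac54)^{n_c}(\tfrac58)^{n_w}(\tfrac32)^{n_1}(\tfrac23)^{n_0}$. The prime $11$ survives in the reduced denominator, so no node of any finite tree lies on $p_R$ or $p_T$. The paper then truncates at a horizon $N$ whose tail error $\bar g\delta^N/(1-\delta)$ is below the certified focal gaps. It uses continuity of the finite-tree Bellman computation, with worker actions frozen. Elsewhere it completes the equilibrium with optimal stationary adviser actions and a fixed-point/selection argument at interior $x$. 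That is why it claims only the focal sequence and lets other recommendations change.

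\textbf{What you do instead.} You show that Conditions~\ref{cond:c1}--\ref{cond:c5} are themselves open, so Theorem~\ref{thm:main} applies verbatim at nearby primitives. This buys more: the whole displayed policy survives with shifted thresholds, and no abstract completion is needed. The price is that you must redo the entire piecewise envelope case analysis with moved breakpoints. You must also check that the orderings it relies on persist: $8/33<2/7<p_R<8/23<16/41<p_T$, $p^+(p_R)<p_T$, and $p^-(p_T)<p_R$. All are strict at \eqref{eq:W}, so they do.

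\textbf{Two corrections.}

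First, drop the contraction fallback in Step~2; it would fail. The certificates are far from fixed points of $\cT^\pi$ even at \eqref{eq:W}. For instance, $\ell^D_1(0)=0$ while $(\cT^\pi\ell^D)_1(0)=m_1(0)=3/10$, so the estimate $\rho/(1-\delta)$ is useless. With unmoved breakpoints, the perturbed operator also has order-one defects near them: between the old and new $p_T$ it returns about $U$ where the certificate has $\delta U$.

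The fallback is also unnecessary. Since $\cT^\pi(g\pm\eta)=\cT^\pi g\pm\delta\eta$, your $\eta$-shift already gives slack $(1-\delta)\eta$ in every sub- and supersolution inequality. That includes those tight at \eqref{eq:W}: the $\psi$ slack at $p=0$, the $B^*$ estimate at $2/7$, and the self-loop identities defining $L$ and $U$. The smallest suppression margin, about $1.6\times10^{-4}$, then absorbs the $O(\eta)$ change.

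Second, in Step~3 not every node used by $F_3(s)$ lies below $1/3$: $p^+(p^+(s))\approx0.357$. Continuity only requires that no node of the three-step tree sits exactly on the cutoff. That is precisely what the paper's prime-$11$ observation guarantees.
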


\begin{proof}
At \eqref{eq:W}, posterior odds reachable from $p_0$ under standard implementation and weak-audit observations have the form
\[
\frac5{11}
\left(\frac54\right)^{n_c}
\left(\frac58\right)^{n_w}
\left(\frac32\right)^{n_1}
\left(\frac23\right)^{n_0},
\qquad n_c,n_w,n_1,n_0\in\mathbb N_0.
\]
After reduction, the prime $11$ remains in the denominator. Such odds can equal neither $1/2$ nor $2/3$, the odds at $p_R$ and $p_T$. Therefore, on every finite posterior tree, all worker choices are a positive distance from their cutoffs.

Fix a uniform nearby bound $\bar g$ on the absolute one-period adviser return. Truncating after $N$ opportunities changes any continuation value by at most
\[
\frac{\bar g\,\delta^N}{1-\delta}.
\]
Choose $N$ so that twice this bound is smaller than the minimum of the strict dynamic action gaps certified above: actual destruction, one-step preservation, initial warning, and resumed warning. On the resulting finite tree, posterior maps, probabilities, rewards, and the finite Bellman maximum are continuous in primitives; worker actions remain fixed for all sufficiently small perturbations because no node lies on a cutoff. Hence those dynamic signs persist, and the tail bound carries them to the infinite-horizon problem. The training gap, focal posterior crossing, and one-opportunity standard implementation-value gap are finite primitive inequalities and persist directly by continuity.

At remaining face states choose stationary optimal adviser actions. At interior training states, the adviser's mixed-action best-response correspondence and the atomless-cost training response are nonempty, convex-valued, and upper hemicontinuous; the construction in Appendix~\ref{app:trembles} is the explicit version under the certified face policy, and the same one-dimensional fixed-point argument supplies a completion after nearby perturbations. Standard measurable selection for the finite-action discounted problem then gives stationary strategies. Conditions~\ref{cond:c1} and \ref{cond:c2} remain strict nearby, so the focal crossing and on-path all-training conclusion also persist. This proves the stated open-neighborhood result without claiming global sup-norm continuity of the original face policy.
\end{proof}

For completeness, almost-sure termination of the intensive-implementation phase holds under the general conditions, not just at the numerical example. While this phase continues, the state has $x=0$ and $p\ge p_R$, and every released project induces $\Hact$. Conditional on $\theta=B$, each non-null observation changes log posterior odds by a fresh increment with mean
\[
a_B\log\frac{a_G}{a_B}
+(1-a_B)\log\frac{1-a_G}{1-a_B}<0.
\]
The inequality follows from $a_B\ne a_G$. Since $\varepsilon>0$, infinitely many such observations would arrive if the phase never ended. The strong law would then send log posterior odds to minus infinity, contradicting $p\ge p_R>0$. The phase therefore ends almost surely. The exact mean duration in example~\eqref{eq:W} is computed in Corollary~\ref{cor:meta}.

This completes the proof of Theorem~\ref{thm:main}.

\section{Exact envelope algebra}
\label{app:envelopes}

The $x=0$ fixed-point equation and $0\le W_k\le1$ imply
\[
\sup_pW_0(p)
\le\max\left\{\frac{29}{100}+\delta\left(\sup_pW_0(p)+\frac1{100}\right),
\frac6{25}+\delta\sup_pW_0(p)\right\},
\]
and hence $\sup_pW_0(p)\le99/100$. Since $J_H\le1/100$, separating the $\delta(1-\varepsilon)W_0(p)$ self-loop gives
\[
L=
\frac{353/1500+\delta\varepsilon(106/8775)}{1-\delta(1-\varepsilon)},
\qquad
U=
\frac{6/25+\delta\varepsilon(99/100)}{1-\delta(1-\varepsilon)}.
\]

The constants above can now be used to check the comparison inequalities. On a low state, $m_0\ge c$ and $m_1=m_0+b>0$, so the low components of $\cT^\pi\ell^D$ dominate $\ell^D$. On an $H$ state,
\[
h(p)+\delta\cP_H\ell^D_0(p)
\ge\frac{353}{1500}+\delta\{(1-\varepsilon)L+\varepsilon c\}=L.
\]
The band component satisfies $(\cT^\pi\ell^D)_1=\delta L$. Hence $\ell^D\le\cT^\pi\ell^D$.

For the upper vector, a low untrained state satisfies
\[
m_0(p)+\delta\cP_Mu^D_1(p)\le\frac{29}{100}+\frac7{10}=\frac{99}{100}.
\]
On an $H$ state, $h\le6/25$, the null successor remains on the $U$ region, and every non-null successor has $u^D_0\le99/100$. Therefore
\[
h(p)+\delta\cP_Hu^D_0(p)
\le\frac6{25}+\delta\{(1-\varepsilon)U+\varepsilon(99/100)\}=U.
\]
The band and high trained components then follow from the policy equations. The two remaining checks concern an experienced worker at low reputations.

We first specify which interval contains each boundary point. The worker tie break assigns $p_R$ to $\Hact$ on the untrained face and $p_T$ to $\Hact$ on the trained face. The relevant posterior preimages satisfy
\[
p^+(8/33)=2/7,\qquad p^+(2/7)=1/3,
\]
\[
p^+(8/23)=2/5,\qquad p^-(16/41)=2/7.
\]
These assignments keep the comparison functions consistent with the worker's tie-breaking rule.

For $2/7\le p<1/3$, the favorable successor is in the knowledge-sensitive region and $u^D_1\le1$ on the unfavorable successor. Directly evaluating the supersolution operator gives
\[
(\cT^\pi u^D)_1(p)
\le m_1(p)+\delta\{q(p)\delta U+[1-q(p)]\}\le B^*=u^D_1(p),
\]
where the last expression is decreasing and equals $B^*$ at $2/7$. For $p\le8/33$, direct substitution gives the $\psi$-supersolution slack
\[
\psi(p)-m_1(p)-\delta\cP_M\psi(p)
=\frac{7p^2(86p^2-166p+65)}{4(4+p)(8-3p)}\ge0,
\]
with equality only at $p=0$. At $p=8/33$, the actual favorable-branch bound is $B^*<\psi(2/7)$, so the displayed calculation remains conservative. For $8/33\le p<2/7$, use $B^*$ on the favorable branch and $\psi$ on the unfavorable branch; clearing positive denominators yields slack at least
\[
\frac{21083905987}{936104400000}>0.
\]

The upper bounds on the three possible continuation configurations are
\[
E_1(p)=m_1(p)+\delta\{q(p)\delta U+[1-q(p)]\}-\delta L,
\]
\[
E_2(p)=m_1(p)+\delta\{q(p)U+[1-q(p)]\psi(p^-(p))\}-\delta L,
\]
\[
E_3(p)=m_1(p)+\delta\{q(p)U+[1-q(p)]B^*\}-\delta L.
\]
They apply on $[1/3,8/23)$, $[8/23,16/41)$, and $[16/41,2/5)$, respectively. In particular, $8/23$ belongs to the second region because its favorable successor is exactly $p_T$, and $16/41$ belongs to the third because its unfavorable successor is exactly $2/7$. On the whole band, $J_M'(p)>0$ because
\[
J_M'(p)=
\frac{-6p(p-1)(3p^3+6p^2-66p+32)}{(p+4)^2(3p-8)^2}>0,
\]
where the cubic is bounded below by $844/125$. Hence $m_1'(p)<-3/20$. The first envelope has an additional negative derivative. For the second, $p^-(p)\in[1/4,2/7]$, $\psi' <0$, and
\[
\delta q'(p)[U-\psi(p^-(p))]
\le\frac{150153}{6140000}<\frac1{40}.
\]
For the third,
\[
\delta q'(p)(U-B^*)
=\frac{30210761}{1228000000}<\frac1{40}.
\]
Each $E_j$ is therefore strictly decreasing on its interval. Its left endpoint supplies the largest possible deviation gain, giving the three negative rational bounds in Appendix~\ref{app:verification}.

\section{Learning after off-path implementation}
\label{app:trembles}

Every standard implementation at $p<p_T$, including a worker tremble by an untrained worker in the knowledge-sensitive region, satisfies the strict-renewal argument in Lemma~\ref{lem:renewal}. It therefore sends $x'$ to one. It remains to complete a worker tremble to $\M$ at $p\ge p_T$. Write $s=p^-(p)$. The confirming posterior is above $p_T$, where knowledge does not change implementation. Any private benefit from learning must therefore come from the disconfirming branch.

Let
\[
F_\kappa(v)=\min\left\{\frac{v}{\bar\kappa},1\right\}
\]
be the training share generated by a gross private return $v$.

If $s<p_R$, the adviser warns for every competence probability at the unfavorable successor and competence saves rebuild cost $b$. The unique training share is therefore
\[
x'=F_\kappa([1-q(p)]b).
\]
This probability need not equal one. Sequential rationality requires the optimal training share at this history, not full renewal at every possible tremble.

If $s\in[p_R,p_T)$, define
\[
C(p)=[1-q(p)][U_M(s,1)-U_H]>0,
\qquad
\varphi(p)=F_\kappa(C(p)).
\]
Set $x'=\min\{\varphi(p),\widehat x(s)\}$. If $\varphi(p)\le\widehat x(s)$, let the adviser warn at the unfavorable successor; then $x'=F_\kappa(C(p))$ is the worker's training response. If $\varphi(p)>\widehat x(s)$, put the public state at the adviser's cutoff and let the indifferent adviser warn with probability
\[
\lambda(p)=\frac{\bar\kappa\widehat x(s)}{C(p)}\in(0,1).
\]
Then
\[
F_\kappa(\lambda(p)C(p))=\widehat x(s)=x'.
\]
Finally, if $s\ge p_T$, competence changes neither implementation choice and the unique training share is $x'=0$.

At the witness, $p^-(4/9)=p_R$ and $p^-(16/31)=p_T$. Moreover, for $p_T\le p<4/9$, $[1-q(p)]b>\bar\kappa$, so the first case happens to give full training. In the middle case,
\[
C(p)=\frac{3(16-31p)}{2000},
\qquad
\lambda(p)=\frac{\widehat x(s)}{400C(p)},
\]
giving explicit values for the interior completion in the example.

A worker tremble to $\Hact$, or a history following $\Sact$, offers no training and sends $x'$ to zero. Adviser trembles to $\R$ are followed by the worker cutoff rule and are covered by the same cases. To obtain consistency, perturb every adviser and implementation action with positive probability and approximate each training cutoff by a completely mixed behavioral rule for every cost type. Bayes' rule then determines beliefs at every history. These perturbations converge to the assessment just constructed; sequential equilibrium requires consistency of this approximating sequence, not optimality of the perturbed profiles themselves.

\section{Exact preservation-economy certificate}
\label{app:preservation}

For face $k\in\{0,1\}$, let $\tau_k=p_R$ if $k=0$ and $\tau_k=p_T$ if $k=1$. The transition law in this counterfactual is imposed mechanically: $\Sact$ preserves $k$, $\M$ sends the successor to face one, and $\Hact$ sends the successor to face zero. Accordingly,
\[
(\widetilde{\cB}V)_k(p)=\max\{S_k(p;V),R_k(p;V)\},
\]
where
\[
S_k(p;V)=\beta r(p)+\delta V_k(p),
\]
and, if $p<\tau_k$,
\[
R_k(p;V)=U_M(p,k)+\cP_M[\beta r+\delta V_1](p),
\]
while, if $p\ge\tau_k$,
\[
R_k(p;V)=U_H+\cP_H[\beta r+\delta V_0](p).
\]
Every expected current return under an available induced action is nonnegative. It is also at most $143/10$, so
\[
\widetilde{\cB}(143/3)\le143/3.
\]
The accompanying Python certificate evaluates this operator on the finite posterior tree. Posterior updates, maximization, and recursion use rational numbers throughout; cached values avoid repeating identical continuation calculations. It checks
\[
R_1(p;\widetilde{\cB}^{18}0)
-S_1(p;\widetilde{\cB}^{18}(143/3))
\]
at $p_0$ and $p_1$ against the exact rational thresholds $1/20$ and $2/25$. The resulting lower bounds on the two release margins are approximately $0.055781070750$ and $0.081834618595$. Decimal values are reported only for readability. The inequalities tested by the program are exact rational comparisons on a finite tree.


\begin{thebibliography}{99}
\small

\bibitem[Achim and Knoepfle(2025)]{AchimKnoepfle2025}
Achim, Peter, and Jan Knoepfle. 2025. ``The Tension between Trust and Oversight in Long-term Relationships.'' Working paper, \href{https://arxiv.org/abs/2504.02696}{arXiv:2504.02696}.

\bibitem[Andina-D\'iaz and Garc\'ia-Mart\'inez(2020)]{AndinaGarcia2020}
Andina-D\'iaz, Ascensi\'on, and Jos\'e A. Garc\'ia-Mart\'inez. 2020. ``Reputation and News Suppression in the Media Industry.'' \emph{Games and Economic Behavior} 123: 240--271.

\bibitem[Bar-Isaac and Deb(2021)]{BarIsaacDeb2021}
Bar-Isaac, Heski, and Joyee Deb. 2021. ``Reputation with Opportunities for Coasting.'' \emph{Journal of the European Economic Association} 19(1): 200--236.

\bibitem[Camara and Dupuis(2023)]{CamaraDupuis2023}
Camara, Fanny, and Nicolas Dupuis. 2023. ``Avoiding Judgement by Recommending Inaction: Beliefs Manipulation and Reputational Concerns.'' Working paper, SSRN 4663047.

\bibitem[Carlin and Manso(2011)]{CarlinManso2011}
Carlin, Bruce I., and Gustavo Manso. 2011. ``Obfuscation, Learning, and the Evolution of Investor Sophistication.'' \emph{Review of Financial Studies} 24(3): 754--785.

\bibitem[Che and H\"orner(2018)]{CheHorner2018}
Che, Yeon-Koo, and Johannes H\"orner. 2018. ``Recommender Systems as Mechanisms for Social Learning.'' \emph{Quarterly Journal of Economics} 133(2): 871--925.

\bibitem[Chen et~al.(2026)]{ChenEtAl2026}
Chen, Yi, Kai Du, Phillip Stocken, and Zhe Wang. 2026. ``Peer Learning, Enforcement, and Reputation.'' \emph{RAND Journal of Economics} 57(3): 648--662.

\bibitem[Deb and Ishii(2025)]{DebIshii2025}
Deb, Joyee, and Yuhta Ishii. 2025. ``Reputation Building under Uncertain Monitoring.'' \emph{Theoretical Economics} 20(1): 169--208.

\bibitem[Garfagnini and Strulovici(2016)]{GarfagniniStrulovici2016}
Garfagnini, Umberto, and Bruno Strulovici. 2016. ``Social Experimentation with Interdependent and Expanding Technologies.'' \emph{Review of Economic Studies} 83(4): 1579--1613.

\bibitem[Halac and Kremer(2020)]{HalacKremer2020}
Halac, Marina, and Ilan Kremer. 2020. ``Experimenting with Career Concerns.'' \emph{American Economic Journal: Microeconomics} 12(1): 260--288.

\bibitem[Hauser(2023)]{Hauser2023}
Hauser, Daniel N. 2023. ``Censorship and Reputation.'' \emph{American Economic Journal: Microeconomics} 15(1): 497--528.

\bibitem[Kuang et~al.(2026)]{KuangEtAl2026}
Kuang, Zhonghong, Yi Liu, and Dong Wei. 2026. ``Incentivizing Knowledge Transfers.'' Working paper, \href{https://arxiv.org/abs/2507.11018}{arXiv:2507.11018}, revised January 2026.

\bibitem[Liu(2011)]{Liu2011}
Liu, Qingmin. 2011. ``Information Acquisition and Reputation Dynamics.'' \emph{Review of Economic Studies} 78(4): 1400--1425.

\bibitem[Lukyanov and Vlasova(2026)]{LukyanovVlasova2026}
Lukyanov, Georgy, and Anna Vlasova. 2026. ``Endogenous Vindication: Reputation and Effort in Expert Advice.'' Working paper, \href{https://arxiv.org/abs/2508.19676}{arXiv:2508.19676}, revised July 2026.

\bibitem[Lukyanov(2026)]{LukyanovEffort2026}
Lukyanov, Georgy. 2026. ``Effort without Evidence.'' Working paper.

\bibitem[Marinovic and Szydlowski(2022)]{MarinovicSzydlowski2022}
Marinovic, Iv\'an, and Martin Szydlowski. 2022. ``Monitoring with Career Concerns.'' \emph{RAND Journal of Economics} 53(2): 404--428.

\bibitem[Marinovic and Szydlowski(2023)]{MarinovicSzydlowski2023}
Marinovic, Iv\'an, and Martin Szydlowski. 2023. ``Monitor Reputation and Transparency.'' \emph{American Economic Journal: Microeconomics} 15(4): 1--67.

\bibitem[Min(2025)]{Min2025}
Min, Weicheng. 2025. ``Bad Reputation Due to Incompetent Expert.'' \emph{Journal of Economic Theory} 230: 106080.

\bibitem[Pei(forthcoming)]{Pei2026}
Pei, Harry. Forthcoming. ``Reputation Effects with Endogenous Records.'' \emph{American Economic Journal: Microeconomics}. \href{https://doi.org/10.1257/mic.20250318}{doi:10.1257/mic.20250318}.

\bibitem[Prendergast and Stole(1996)]{PrendergastStole1996}
Prendergast, Canice, and Lars Stole. 1996. ``Impetuous Youngsters and Jaded Old-Timers: Acquiring a Reputation for Learning.'' \emph{Journal of Political Economy} 104(6): 1105--1134.

\bibitem[Strulovici(2022)]{Strulovici2022}
Strulovici, Bruno. 2022. ``Can Society Function Without Ethical Agents? An Informational Perspective.'' Working paper, arXiv:2003.05441.


\end{thebibliography}
\end{document}